\documentclass[12pt]{article}

\usepackage{geometry}
\usepackage{color}
\usepackage{graphicx}
\usepackage{amsmath}
\usepackage{amsfonts}
\usepackage{amsthm}
\usepackage{amscd}
\usepackage{epsfig}
\usepackage{amssymb}
\usepackage{slashed}
\usepackage{tabularx}
\usepackage{calligra}
\usepackage{enumerate}
\usepackage{enumitem}
\usepackage{hyperref}
\usepackage{float}
\usepackage{stackrel}
\usepackage[T1]{fontenc}
\usepackage{longtable}
\usepackage{amsthm}
\usepackage{mathrsfs}
\usepackage{verbatim} 
\usepackage{fancyhdr}
\usepackage{tikz}
\usepackage{tikz-cd}
\usetikzlibrary{matrix}
\usetikzlibrary{positioning}
\usepackage[all]{xy}
\usepackage{units}
\usepackage{textcomp}
\usepackage{turnstile}
\usepackage{ stmaryrd }
\usepackage{tikz-3dplot}

\geometry{
 a4paper,
 total={210mm,297mm},
 left=2cm,
 right=2cm,
 top=2cm,
 bottom=2cm,
 bindingoffset=0mm}

\theoremstyle{plain}
\newtheorem{theorem}{Theorem}
\newtheorem{proposition}[theorem]{Proposition}

\newtheorem{definition}[theorem]{Definition}
\newtheorem{remark}[theorem]{Remark}

\numberwithin{equation}{section}

\def\d{{\rm d}}
\def\i{{\rm i}}
\def\CP{\mathbb{CP}}

\newcommand{\C}{\mathscr{C}}
\newcommand{\M}{\mathcal{M}}

\newcommand{\PT}{\mathcal{PT}}
\newcommand{\K}{\mathcal{K}}

\begin{document}

\title{Teukolsky equations, twistor functions, and conformally self-dual spaces}
\author{Bernardo Araneda\footnote{Email: \texttt{bernardo.araneda@aei.mpg.de}}  \\
Max-Planck-Institut f\"ur Gravitationsphysik \\ 
(Albert-Einstein-Institut), Am M\"uhlenberg 1, \\
D-14476 Potsdam, Germany}

\date{\today}

\maketitle

\begin{abstract}
We prove a correspondence, for Riemannian manifolds with self-dual Weyl tensor, between twistor functions and solutions to the Teukolsky equations for any conformal and spin weights. In particular, we give a contour integral formula for solutions to the Teukolsky equations, and we find a recursion operator that generates an infinite family of solutions and leads to the construction of \v{C}ech representatives and sheaf cohomology classes on twistor space. 
Apart from the general conformally self-dual case, examples include self-dual black holes, scalar-flat K\"ahler surfaces, and quaternionic-K\"ahler metrics, where we map the Teukolsky equation to the conformal wave equation, establish new relations to the linearised Przanowski equation, and find new classes of quaternionic deformations.
\end{abstract}

\section{Introduction}

The Teukolsky equations \cite{Teukolsky} are scalar partial differential equations describing perturbations of algebraically special space-times, notably rotating black holes. They are instrumental to modern studies in mathematical relativity and gravitational physics, including black hole stability, classical and quantum fields on black holes, propagation of gravitational waves, etc.
In Newman-Penrose notation \cite{PR1}, the Teukolsky equation for the extreme spin-weight component, $\chi$, of a spin-$|s|$ field (with $s\in\frac{1}{2}\mathbb{Z}$), is 
\begin{equation}\label{TeukEq}
\begin{aligned}
{\rm T}_{s}\chi :={}
& 2\left[ \left( D-(2s-1)\varepsilon+\tilde\varepsilon-2s\rho-\tilde\rho \right) \left( D'+2s\varepsilon'-\rho'\right) \right. \\
& \left. - (\delta-(2s-1)\beta+\tilde\beta'-2s\tau-\tilde\tau ) \left( \delta'+2s\beta'-\tau' \right) 
- 2(s-\tfrac{1}{2})(s-1)\Psi_2 \right]\chi = 0
\end{aligned}
\end{equation}
where $D,D',\delta,\delta'$ are derivatives along a principal null tetrad, $\Psi_2$ is the middle component of the Weyl tensor with respect to this tetrad, and the rest of the symbols in \eqref{TeukEq} are the spin coefficients of the tetrad. For $s=0$, \eqref{TeukEq} reduces to the covariant scalar wave equation.

More generally, the Teukolsky equation \eqref{TeukEq} is part of a larger family of scalar  equations associated to space-time perturbations. Other examples include: the spin-weight zero component of an electromagnetic perturbation satisfies the Fackerell-Ipser equation \cite{Fackerell}; gravitational perturbations of static black holes are described by the Regge-Wheeler and Zerilli equations \cite{ReggeWheeler, Zerilli}; and deformations of hyper-K\"ahler and quaternionic-K\"ahler 4-metrics are encoded, respectively, in the linearised heavenly and Przanowski equations \cite{DM, Hoegner, Alexandrov2}. All of the above (including \eqref{TeukEq}) are particular cases of a generalised ``wave'' equation,
\begin{align}\label{BoxCIntro}
 (\Box^{\C}+c\Psi_2)\chi = 0
\end{align}
where $c$ is some constant\footnote{The only case in which $c$ is not a constant is the Zerilli equation, where $c$ is an integro-differential operator. This is not relevant for the present paper, because of the assumption \eqref{SDWeyl}.} and $\Box^{\C}=g^{ab}\C_a\C_b$ is the ``wave'' operator associated to a conformally and GHP invariant connection $\C_a$ \cite{A18, A21}. The field $\chi$ is a weighted scalar field that carries both conformal weight $w\in\frac{\mathbb{Z}}{2}$ and spin weight $s=\frac{p}{2}\in\frac{\mathbb{Z}}{2}$. For simplicity, we will refer to \eqref{BoxCIntro} as the ``Teukolsky equation for weights $(w,p)$'' even though it includes not only \eqref{TeukEq} but also all of the other cases mentioned above (simply by adjusting $w$ and $p$).

Despite a number of remarkable properties of the Teukolsky equation, the construction of general solutions is a difficult problem. In this work, we are interested in the study of this equation on a special class of geometries: those with self-dual Weyl tensor,
\begin{align}\label{SDWeyl}
 *C_{abcd} = C_{abcd},
\end{align}
also called conformally self-dual. Even though the assumption \eqref{SDWeyl} excludes the standard Lorentzian black holes, the Teukolsky equation on a geometry satisfying \eqref{SDWeyl} is non-trivial in that it does not reduce to the ordinary scalar wave equation. In terms of \eqref{TeukEq}-\eqref{BoxCIntro}, the condition \eqref{SDWeyl} implies $\Psi_2=0$ (but no restrictions on the Ricci tensor). 

Our interest in manifolds satisfying \eqref{SDWeyl} is that they have a twistor space. In a number of situations, twistor methods have led to remarkable constructions of solutions to (linear and non-linear) differential equations on space-time in terms of {\em free} holomorphic twistor functions, see e.g. \cite{Atiyah}. A prominent example at the linear level is the Penrose transform: an isomorphism between the space of massless fields on Minkowski and certain twistor sheaf cohomology groups \cite{Eastwood}. In this spirit, our main result in connection to the Teukolsky equation is the following:

\begin{theorem}\label{theorem:main}
Let $(\M,g_{ab})$ be conformally self-dual \eqref{SDWeyl}, let $\mathcal{PT}$ be its twistor space, and let $w\in\frac{\mathbb{Z}}{2}$, $p\in\mathbb{Z}$. Fix an integrable complex structure $J^{a}{}_{b}$ on $\M$, with principal spinors $o_A,\iota_A$. 
\begin{enumerate}[noitemsep, nolistsep]
\item Let $\Psi$ be a representative of the \v{C}ech cohomology group $\check{H}^{1}(\PT,\mathcal{O}(2w))$, and let 
\begin{align}\label{CountIntIntro}
 \chi(x^a):= \frac{1}{2\pi\i}\oint_{\Gamma}\frac{\Psi(x^a,\lambda)\lambda_{A}\d\lambda^{A}}{ (o_{B}\lambda^{B})^{w+1-\frac{p}{2}} (\iota_{B}\lambda^{B})^{w+1+\frac{p}{2}}},
\end{align}
where $x^a\in\M$, $\Gamma\subset\CP^1$. Then $\chi$ solves the Teukolsky equation for weights $(w,p)$. 
\label{item:solution}
\item Let $\chi$ be a solution to the Teukolsky equation for weights $(w,p)$. Then $\chi$ generates a \v{C}ech representative $\Psi$ in $\check{H}^{1}(\PT,\mathcal{O}(2w))$, such that formula \eqref{CountIntIntro} holds.
\item The space-time field $\chi$ of the previous item is invariant under changes of \v{C}ech representative $\Psi$ if and only if $w\pm\frac{p}{2}\in-\mathbb{N}$.
\label{item:cohomology}
\end{enumerate}
\end{theorem}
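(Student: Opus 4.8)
The plan is to work in the twistor correspondence picture. Recall that $\PT\to\M$ is the bundle whose fibre over $x^a$ is the twistor line $L_x\cong\CP^1$ with homogeneous coordinate $\lambda^A$, and that \eqref{SDWeyl} is precisely the integrability of the induced almost complex structure on $\PT$; in the coordinates $(x^a,\lambda)$ this means that a holomorphic section $\Psi$ of $\mathcal{O}(2w)$, viewed as a function of homogeneity $2w$ in $\lambda$, obeys the pair of equations $\partial_{\bar\lambda}\Psi=0$ and $\lambda^{A}\nabla_{AA'}\Psi=0$, whose joint integrability is equivalent to \eqref{SDWeyl}. Since \eqref{SDWeyl} forces $\Psi_2=0$, the Teukolsky equation for weights $(w,p)$ is $\Box^{\C}\chi=0$ with $\Box^{\C}=g^{ab}\C_a\C_b$. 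Finally, the kernel in \eqref{CountIntIntro}, namely $\lambda_A\d\lambda^A$ divided by $(o_B\lambda^B)^{w+1-\frac{p}{2}}(\iota_B\lambda^B)^{w+1+\frac{p}{2}}$, is a meromorphic $1$-form on $L_x$ of total $\lambda$-homogeneity zero whose only singularities, apart from those of $\Psi$, are poles of order $w+1-\tfrac{p}{2}$ at the marked point $o_B\lambda^B=0$ and of order $w+1+\tfrac{p}{2}$ at $\iota_B\lambda^B=0$; the GHP weights carried by $o_A,\iota_A$ are what endow $\chi$ with the weights $(w,p)$.

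For item \ref{item:solution} I would differentiate \eqref{CountIntIntro} under the integral sign. The operator $\C_{AA'}$ acts both on $\Psi(x^a,\lambda)$ and on the factors $(o_B\lambda^B)$, $(\iota_B\lambda^B)$, and the derivatives $\nabla_{AA'}o_B$, $\nabla_{AA'}\iota_B$ decompose into the GHP spin coefficients of the frame $o_A,\iota_A$, which are exactly the coefficients appearing in the weight terms of $\Box^{\C}$. After a second differentiation and a contraction, the goal is to show that $\Box^{\C}$ applied to the integrand equals the exterior derivative along $L_x$ of a rational $1$-form with no poles on $\Gamma$: the ``genuine Laplacian'' contribution acting on $\Psi$ is removed using $\lambda^{A}\nabla_{AA'}\Psi=0$ and $\partial_{\bar\lambda}\Psi=0$, while the curvature and frame cross terms cancel against the derivatives of the kernel by virtue of \eqref{SDWeyl} (more generally, the triviality of the self-dual curvature along the twistor lines). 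Then $\oint_\Gamma\d(\cdots)=0$ yields $\Box^{\C}\chi=0$. The precise exponents $w+1\mp\tfrac{p}{2}$ of the kernel are forced here, being the unique values for which this reduction to an exact fibre $1$-form closes.

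For the converse (the second item), given a solution of $\Box^{\C}\chi=0$ I would apply the recursion operator $\K$ to produce the infinite family $\chi_n:=\K^{n}\chi$ of Teukolsky solutions and assemble the generating functions $\Psi_{\pm}(x^a,\lambda):=\sum_{n\ge0}\chi_n\,\zeta^{\pm n}$ in an affine fibre coordinate $\zeta$ centred respectively at the two marked points. The defining property of $\K$ is that this recursion is equivalent to $\partial_{\bar\lambda}\Psi_{\pm}=0$ together with $\lambda^{A}\nabla_{AA'}\Psi_{\pm}=0$, so each $\Psi_{\pm}$ is holomorphic on its patch of $\PT$; then $\Psi:=\Psi_{+}-\Psi_{-}$, homogenised by the appropriate power of the fibre coordinate, is a \v{C}ech $1$-cocycle, i.e. a class in $\check{H}^{1}(\PT,\mathcal{O}(2w))$, and a residue computation at the two marked points shows that feeding this $\Psi$ back into \eqref{CountIntIntro} returns $\chi$. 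For the gauge statement of item \ref{item:cohomology}, I would change the representative by a coboundary $\Psi\mapsto\Psi+\delta\eta$, with $\eta=\{\eta_i\}$ holomorphic on the patches of a cover, and deform $\Gamma$ so that the change in $\chi$ localises to residues of the $\eta_i$ against the kernel at the two marked points. Since each $\eta_i$ is holomorphic there, the change vanishes for every $\eta$ if and only if the kernel is regular at both marked points, i.e. $w+1-\tfrac{p}{2}\le0$ and $w+1+\tfrac{p}{2}\le0$, that is $w\pm\tfrac{p}{2}\in-\mathbb{N}$; conversely, if one of these exponents is positive, a choice of $\eta_i$ with prescribed jet at the relevant point makes the residue nonzero, so the representative matters.

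The routine part is the residue calculus in item \ref{item:cohomology}. I expect the substantive work to be twofold: verifying in item \ref{item:solution} that the curvature and frame cross terms genuinely cancel, so that $\Box^{\C}$ of the integrand is exact on $L_x$ --- the step that uses \eqref{SDWeyl} essentially and pins down the kernel --- and, in the converse, showing that $\sum_{n\ge0}(\K^{n}\chi)\,\zeta^{n}$ converges to an honest holomorphic function on an open subset of $\PT$ rather than a merely formal power series, and that the cohomology class it defines is the unique one compatible with \eqref{CountIntIntro}.
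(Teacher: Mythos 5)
Your item \ref{item:cohomology} argument is essentially the paper's own (Prop. \ref{prop:cohomology}: Taylor-expand the patchwise-holomorphic pieces $h_0,h_1$ and compute residues, arriving at \eqref{Restrictionspw}), and your item \ref{item:solution} plan is a genuinely different route: you would differentiate \eqref{CountIntIntro} under the integral and show that $\Box^{\C}$ of the integrand is exact along the fibre. The paper avoids that computation altogether: it Laurent-expands $\Psi$ on the overlap and proves the identity \eqref{TDH} (Prop. \ref{prop:Hermitian}), so that the twistor-function condition \eqref{TF} is \emph{equivalent} to the recursion relations \eqref{RecursionRelations} for the coefficients $\psi_j$; then $\Box^{\C}\psi_j=0$ follows from the operator identities \eqref{DolbeaultSquared} and \eqref{CSD}, and the contour integral \eqref{ContourIntegral} is literally the $j$-th Laurent coefficient, so nothing needs to be cancelled under the integral sign. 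Two cautions about your sketch: the cancellation you assert is attributed to \eqref{SDWeyl}, but in \eqref{TDH} it is the \emph{integrability of $J$} (the shear-free conditions $\sigma^{0}_{A'}=\sigma^{1}_{A'}=0$) that removes the frame terms, while \eqref{SDWeyl} enters through $\C^{A'}\tilde{\C}_{A'}=\tfrac{1}{2}\Box^{\C}$ and the existence of $\PT$; and as written your key step ("curvature and frame cross terms genuinely cancel") is asserted, not proved, whereas it is precisely the content of Prop. \ref{prop:Hermitian}.

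The genuine gap is in your item 2. The one-sided series $\Psi_{\pm}$ are not annihilated by the twistor distribution: writing $\Psi_{+}$ as a series in non-negative powers of $\zeta$ with lowest coefficient $\chi$, the condition $\lambda^{A}L_{AA'}\Psi_{+}=0$ at the truncation order forces $\C_{A'}\chi=0$ (and for $\Psi_{-}$, $\tilde{\C}_{A'}\chi=0$), which fails for a generic Teukolsky solution; so neither piece is "holomorphic on its patch of $\PT$" — they do not descend to twistor space at all. Moreover, with $\Psi:=\Psi_{+}-\Psi_{-}$ the coefficient that \eqref{CountIntIntro} extracts is the coefficient of $(\lambda_0)^{w-\frac{p}{2}}(\lambda_1)^{w+\frac{p}{2}}$, which must be $\chi$ itself; in your difference that coefficient is either $\chi-\chi=0$ or missing, so the contour integral returns $0$, not $\chi$. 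The correct construction (Prop. \ref{prop:TF2}) is the single doubly infinite series \eqref{GeneratedTF} on the overlap $V_0\cap V_1$: for a two-set cover a \v{C}ech 1-cochain \emph{is} a section on the overlap, so no splitting into a difference of patch-holomorphic pieces is needed (nor possible in general); since all coefficients $\mathcal{R}^{m}\chi$ satisfy the recursion relations, \eqref{TF} holds, and the $m=0$ coefficient recovered by the contour integral is $\chi$. (Incidentally, the paper treats this series formally and does not address convergence, so that part of the "substantive work" you anticipate is not in the paper's proof either.)
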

 
In the statement of the above result, we are using the fact that \v{C}ech representatives are local holomorphic sections of $\mathcal{O}(2w)$ over twistor space. The correspondence with cohomology classes is more subtle as it implies a particular relation between conformal and spin weights, as described by item \ref{item:cohomology}. We also see that the conformal weight characterising a Teukolsky equation plays an important role in that it determines the Chern class of the line bundle involved in the cohomology group $\check{H}^{1}(\PT,\mathcal{O}(2w))$. Finally, we note that even though a space satisfying \eqref{SDWeyl} admits many (local) integrable complex structures, the construction in theorem \ref{theorem:main} depends on a choice of a single one, which we see as a desirable feature since non-self-dual black holes (which motivate this study) have only one such structure (for a fixed orientation).

Theorem \ref{theorem:main} will be established following a number of intermediate results. After some preliminary material in section \ref{Sec:preliminaries}, in section \ref{Sec:TwistorTheory} we will first see that, for generic Hermitian manifolds, there is a close relationship between the {\em twistor distribution} on the spin bundle and the connection $\C_{a}$ on space-time (prop. \ref{prop:Hermitian}). Then under the assumption \eqref{SDWeyl}, the twistor distribution has a non-trivial kernel (twistor functions) and this will lead to formula \eqref{CountIntIntro} via the relation to $\C_a$ (prop. \ref{prop:TF1}). We will then find a symmetry operator that generates an infinite family of solutions to \eqref{BoxCIntro} for different weights (prop. \ref{prop:RO}), and this will in turn lead to a recursive construction of twistor functions (prop. \ref{prop:TF2} and \ref{prop:ROTF}), following pioneering methods in the twistor literature \cite{MW, DM, Hoegner}. 
Having established the above, a natural question is whether the correspondence between twistor functions (thought of as \v{C}ech representatives) and solutions to \eqref{BoxCIntro} is one-to-one, this will lead naturally to sheaf cohomology (prop. \ref{prop:cohomology}). In the special case where the geometry is also conformally K\"ahler, the recursion operator will allow us to transform the Teukolsky equation into the conformal wave equation (prop. \ref{prop:CKcase}).

Finally, one of the main applications of the Teukolsky equations is to {\em gravitational} perturbations of {\em vacuum} spaces. Given our assumption \eqref{SDWeyl}, the additional requirement of Ricci-flatness would give a hyper-K\"ahler metric. In that case, a constant spin frame could be chosen and \eqref{TeukEq} would reduce to the ordinary wave equation; to avoid this, one can introduce a cosmological constant, so that there are no parallel spinors and \eqref{TeukEq} is non-trivial\footnote{One can also study the Teukolsky equation on a hyper-K\"ahler metric by choosing a non-constant spin frame, e.g. using a complex structure that is {\em conformally} K\"ahler but {\em not K\"ahler}. Any Gibbons-Hawking space admits such a structure (although other hyper-K\"ahler manifolds such as K3 or Atiyah-Hitchin do not).}. Such a space is called a quaternionic-K\"ahler 4-manifold, and in section \ref{Sec:QK} we will apply our construction to study gravitational perturbations of these spaces, via an operator identity (prop. \ref{prop:SEOT}). We will find new classes of deformations that preserve the self-dual Einstein condition (prop. \ref{prop:metricpert} and \ref{prop:qkdeformations}), and we will also give contour integral formulae for them.

\section{Preliminaries}
\label{Sec:preliminaries}

\subsection{Complex structures and connections}
\label{Sec:connections}

Let $(\M,g_{ab})$ be a four-dimensional, orientable Riemannian manifold, with Levi-Civita connection $\nabla_{a}$. Let $J^{a}{}_{b}$ be a compatible almost-complex structure. The 2-form $g_{ac}J^{c}{}_{b}$ is necessarily self-dual or anti-self-dual. Choosing the latter for concreteness, $J^{a}{}_{b}$ can be written in terms of its principal spinors as 
\begin{align}\label{J}
 J^{a}{}_{b} = \i(o^{A}\iota_{B}+\iota^{A}o_{B})\delta^{A'}_{B'},
\end{align}
where we have chosen the normalisation $o_{A}\iota^{A}=1$. Notice that $o_A,\iota_A$ are only defined up to scale: we have the ``GHP'' freedom
\begin{align}
 o_{A} \to z o_{A}, \qquad \iota_{A} \to z^{-1}\iota_{A}, \qquad z\in\mathbb{C}^{*}.
 \label{GHP}
\end{align}
In Euclidean signature we have $\iota_{A}=o^{\dagger}_{A}$, and $z \in U(1)$.

The $+\i$ and $-\i$ eigenspaces of $J^{a}{}_{b}$ are respectively $T^{+}={\rm span}\{\iota^{A}\partial_{AA'}\}$ and $T^{-}={\rm span}\{o^{A}\partial_{AA'}\}$. The almost-complex structure is integrable iff these eigenspaces are involutive, which can be shown to be equivalent to the ``shear-free'' equations
\begin{align}\label{SFR}
 \iota^{A}\iota^{B}\nabla_{AA'}\iota_{B} = 0, \qquad o^{A}o^{B}\nabla_{AA'}o_{B}=0.
\end{align}
In this case $T^{+}$ and $T^{-}$ are tangent to (complex) surfaces, which are called twistor surfaces. Conversely, given two linearly independent spinor fields satisfying \eqref{SFR} and $o_A\iota^A=1$, the tensor field \eqref{J} is an integrable almost-complex structure.

The triple $(\M,g_{ab},J^{a}{}_{b})$ is called an almost-Hermitian manifold, and Hermitian if $J^{a}{}_{b}$ is integrable. The tensor field $J^{a}{}_{b}$ is conformally invariant: if $g_{ab} \to \hat{g}_{ab}=\Omega^2 g_{ab}$ for some positive scalar field $\Omega$, then $J^{a}{}_{b}$ is also a compatible almost-complex structure for $\hat{g}_{ab}$. The principal spinors transform as 
\begin{align}
 \hat{o}_{A} = \Omega^{1/2}o_{A}, \quad \hat\iota_{A} = \Omega^{1/2}\iota_{A}, 
 \qquad
  \hat{o}^{A} = \Omega^{-1/2}o^{A}, \quad \hat\iota^{A} = \Omega^{-1/2}\iota^{A}.
 \label{CT}
\end{align}
We note that, in a complexified picture, one can choose $\hat{o}^{A}=\Omega^{w^0}o^{A}$ and $\hat\iota^{A}=\Omega^{w^1}\iota^{A}$ where $w^0,w^1$ are any real numbers satisfying $w^0+w^1+1=0$ (see \cite[Section 5.6]{PR1}). The choice $w^0=w^1=-1/2$ in \eqref{CT} ensures that conformal transformations commute with complex conjugation in Euclidean signature.

We define a weighted spinor field, with weights $(w,p)$ where $w$ is conformal weight and $p$ is GHP weight, to be a spinor field $\varphi^{\mathcal{I}}$ ($\mathcal{I}$ represents an arbitrary collection of abstract tensor/spinor indices) which transforms under \eqref{GHP} and \eqref{CT} as 
\begin{align}
 \varphi^{\mathcal{I}} \to \Omega^{w}z^{p} \varphi^{\mathcal{I}}.
 \label{transf}
\end{align}
For example: $o_{A}$ and $\iota_{A}$ have weights $(\frac{1}{2},1)$ and $(\tfrac{1}{2},-1)$ respectively, and $o^{A}$ and $\iota^{A}$ have weights $(-\frac{1}{2},1)$ and $(-\tfrac{1}{2},-1)$. A weighted field is more properly thought of as a section of a complex vector bundle $E_{(w,p)}$ over $\M$. A connection $\C_a$ in this bundle was constructed in \cite{A18, A21}. If $\psi,\varphi_{A}, v_{b}$ are respectively a scalar, spinor, and vector field, all of them with weights $(w,p)$, then $\C_{a}$ is defined by
\begin{subequations}\label{DefC}
\begin{align}
\C_{a}\psi &= \nabla_{a}\psi + (w f_{a} + p P_{a})\psi, \label{Cscalar} \\
\C_{a}\varphi_{B} &= \nabla_{a}\varphi_{B} + (w f_{a} + p P_{a})\varphi_{B} - f_{A'B}\varphi_{A}, \label{Cspinor} \\
\C_{a}v_{b} &= \nabla_{a}v_{b} + (w f_{a} + p P_{a})v_{b} - (f_{a}\delta^{c}_{b}+f_{b}\delta^{c}_{a}-g_{ab}f^{c})v_{c}, \label{Cvector}
\end{align}
\end{subequations}
where $f_{a}$ is the Lee form and $P_{a}$ is a conformally invariant version of the GHP connection 1-form $\omega_{a}$:
\begin{align}\label{ConnectionForms} 
f_{a} := -\tfrac{1}{2}J^{b}{}_{c}\nabla_{b}J^{c}{}_{a}, \qquad 
 P_{a} := \omega_{a} - \tfrac{1}{2}\i J^{b}{}_{a}f_{b}, \qquad
 \omega_{a} := \iota_{B}\nabla_{a}o^{B}.
\end{align}
The action of $\C_{a}$ on fields with a different index structure is deduced from \eqref{DefC} by standard spinor/tensor rules. 

\begin{remark}\label{Rem:propertiesofC}
We collect some identities that will be useful later, cf. \cite{A21} for more details. Let $J^{a}{}_{b}$ be an almost-complex structure, with principal spinors $o_A,\iota_A$.
\begin{enumerate}[noitemsep, nolistsep]
\item The following identities hold (see e.g. \cite[Eq. (2.53)]{A18}):
\begin{subequations}\label{Cdyad}
\begin{align}
 \C_{AA'}o_{B} &= \sigma^{0}_{A'}\iota_{A}\iota_{B}, 
 \qquad \sigma^{0}_{A'}:=o^{A}o^{B}\nabla_{AA'}o_{B}, \\
 \C_{AA'}\iota_{B} &= \sigma^{1}_{A'}o_{A}o_{B}, 
 \qquad \sigma^{1}_{A'}:=\iota^{A}\iota^{B}\nabla_{AA'}\iota_{B}.
\end{align}
\end{subequations}
In particular, from \eqref{SFR}  we see that $J^{a}{}_{b}$ is integrable iff $\sigma^{0}_{A'}=\sigma^{1}_{A'}=0$. 
\label{item-integrableJ}
\item Define the following Dolbeault-like operators
\begin{align}\label{Dolbeault}
 \tilde{\C}_{A'}:=o^{A}\C_{AA'}, 
 \qquad \C_{A'}:=\iota^{A}\C_{AA'}.
\end{align}
If $J^{a}{}_{b}$ is integrable and $\Psi_{ABCD}$ is algebraically special along $o^A,\iota^A$, then 
\begin{align}\label{DolbeaultSquared}
 \tilde{\C}^{A'}\tilde{\C}_{A'}=0, \qquad
 \C^{A'}\C_{A'}=0
\end{align}
when acting on any weighted spinor field. See \cite[Proposition 2.2]{A21}.
\item Suppose that $J^{a}{}_{b}$ is integrable and $\Psi_{ABCD}$ is algebraically special along $o^A,\iota^A$. Let $A_{A'},B_{A'}$ be weighted spinor fields such that 
\begin{align}
 \tilde{\C}^{A'}A_{A'} = 0, \qquad \C^{A'}B_{A'} = 0.
\end{align}
Then there exist, locally, scalar fields $a,b$ such that (cf. \cite[Lemma 2.4]{A21})
\begin{align}\label{potentials}
  A_{A'} = \tilde{\C}_{A'}a, \qquad B_{A'} =  \C_{A'}b.
\end{align}
\label{item:potentials}
\end{enumerate}
\end{remark}

\subsection{Teukolsky operators}
\label{sec:TeukolskyOp}

The connection $\C_{a}$ defines a ``wave'' operator on generic weighted spinor fields:
\begin{align}\label{BoxC}
 \Box^{\C}:=g^{ab}\C_{a}\C_{b}.
\end{align}
\begin{proposition}\label{prop:identitiesC}
Let $(\M,g_{ab},J^{a}{}_{b})$ be an almost-Hermitian manifold. 
\smallskip
\begin{enumerate}[noitemsep, nolistsep]
\item The operator \eqref{BoxC} acting on arbitrary weighted scalar fields with weights $(w,p)$ has the following Newman-Penrose expression:
\begin{equation}\label{BoxCNP}
\begin{aligned}
 \tfrac{1}{2}\Box^{\C} ={}& [D-(p-1)\varepsilon+\tilde\varepsilon+(w-\tfrac{p}{2}+1)\rho-\tilde\rho][D'+p\varepsilon+(w+\tfrac{p}{2})\rho'] \\
 & - [\delta-(p-1)\beta+\tilde\beta'+(w-\tfrac{p}{2}+1)\tau-\tilde\tau'][\delta'+p\beta'+(w+\tfrac{p}{2})\tau'] \\
 & + \tfrac{3}{2}p\Psi_2 - p(\sigma\sigma' - \kappa\kappa').
\end{aligned}
\end{equation}
\item Suppose that $(\M,g_{ab},J^{a}{}_{b})$ is Hermitian. Let ${\rm T}_{s}$ be the spin-weight $s$ Teukolsky operator \eqref{TeukEq}. Then for any scalar field $\chi$ with weights $w=-s-1$, $p=2s$, we have
\begin{align}\label{RelationTeukBoxC}
 {\rm T}_{s}\chi = \left[ \Box^{\C} - 2(2s^2+1)\Psi_{2} \right]\chi.
\end{align}
\item Suppose that $(\M,g_{ab},J^{a}{}_{b})$ is conformally K\"ahler, where the K\"ahler metric is $\hat{g}_{ab}=\phi^2g_{ab}$. Then for any scalar field $\eta$ with weights $(w,0)$, we have
\begin{align}\label{identityCK}
 \Box^{\C}\eta= \phi^{-(w+1)}(\Box + 2\Psi_2 + \tfrac{R}{6})[\phi^{(w+1)}\eta].
\end{align}
\item If the geometry is conformally self-dual, then the operators \eqref{Dolbeault}, \eqref{BoxC} acting on any weighted scalar field satisfy
\begin{align}\label{CSD}
 \mathscr{C}^{A'}\tilde{\mathscr{C}}_{A'} = \tfrac{1}{2}\Box^{\mathscr{C}}, 
 \qquad 
 [\C_{A'},\tilde{\C}^{A'}]=0.
\end{align}
\end{enumerate}
\end{proposition}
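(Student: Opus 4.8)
The plan is to establish the four parts in sequence, since (2)--(4) are all consequences of the Newman--Penrose expansion in part~(1) together with the spinor machinery of Remark~\ref{Rem:propertiesofC}. For part~(1), I would decompose the metric along the principal null tetrad, $g^{ab}=l^an^b+n^al^b-m^a\bar m^b-\bar m^am^b$, turning $\Box^{\C}=g^{ab}\C_a\C_b$ into the sum of the four $\C$-covariant second directional derivatives associated with $D,D',\delta,\delta'$. Acting with \eqref{DefC}: the inner $\C_a$ sends the scalar $\chi$ to $\nabla_a\chi+(wf_a+pP_a)\chi$, and the outer $\C_a$ acts on the resulting weighted covector through \eqref{Cvector}. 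The computation then reduces to projecting the connection $1$-forms \eqref{ConnectionForms} onto the tetrad: the contractions $l^af_a,\,n^af_a,\,\dots$ and the analogous projections of $\omega_a$ and $P_a$ are expressed in terms of the spin coefficients $\rho,\rho',\tau,\tau',\varepsilon,\beta,\dots$ and their primed partners. Collecting the weight-dependent terms reproduces the two factorised products of \eqref{BoxCNP}, while the scalar remainder $\tfrac32 p\Psi_2-p(\sigma\sigma'-\kappa\kappa')$ emerges from commuting the two derivatives (a Ricci identity) and from the shear contributions. Obtaining these coefficients exactly, most efficiently in GHP-covariant bookkeeping, is the principal computational hurdle, and everything downstream depends on it.

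Part~(2) is a direct specialisation. Inserting $w=-s-1,\ p=2s$ yields $w-\tfrac p2+1=-2s$, $w+\tfrac p2=-1$ and $p-1=2s-1$, so the two differential products in \eqref{BoxCNP} coincide with those of the Teukolsky operator \eqref{TeukEq} once the spin-coefficient terms are matched. On a Hermitian manifold the integrability/shear-free conditions \eqref{SFR}--\eqref{Cdyad} force $\sigma^0_{A'}=\sigma^1_{A'}=0$, hence $\sigma=\sigma'=\kappa=\kappa'=0$, so the term $-p(\sigma\sigma'-\kappa\kappa')$ drops out. It then remains only to combine the curvature terms: the $\tfrac32 p\Psi_2=3s\Psi_2$ of \eqref{BoxCNP} with the explicit $-2(s-\tfrac12)(s-1)\Psi_2$ of \eqref{TeukEq}, whose coefficients (after the overall factor $2$) add to $-6s-4(s-\tfrac12)(s-1)=-2(2s^2+1)$, giving \eqref{RelationTeukBoxC}.

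For part~(3) I would use the conformal invariance of $\C$ to evaluate $\Box^{\C}$ in the K\"ahler gauge $\hat g_{ab}=\phi^2g_{ab}$, where the Lee form vanishes, $\hat f_a=0$. Since a weight-$(w,0)$ scalar couples to the connection \eqref{DefC} only through $f_a$, in this gauge $\hat\C_a$ reduces to $\hat\nabla_a$ on $\eta$ and on its gradient (the tensor term of \eqref{Cvector} also dropping), so $\Box^{\C}$ becomes the K\"ahler scalar Laplacian $\hat\Box$. Transforming $\hat\Box$ back to the physical frame by the standard conformal transformation of the scalar Laplacian produces $\Box+\tfrac R6$ acting on $\phi^{w+1}\eta$ --- the passage to conformal weight $-1$ accounting for the power $\phi^{w+1}$ --- and the remaining $2\Psi_2$ is supplied by the K\"ahler identity $\hat\Psi_2\propto\hat R$ relating the self-dual Weyl scalar to the scalar curvature. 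Reconciling these curvature contributions is the technical core of~(3).

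Part~(4) I would treat spinorially rather than through \eqref{BoxCNP}. Decomposing the identity spinor in the dyad gives $\C_{AA'}=o_A\C_{A'}-\iota_A\tilde\C_{A'}$, and \eqref{Cdyad} implies $\C^{AA'}o_A=\C^{AA'}\iota_A=0$ \emph{irrespective} of integrability, because the shear terms are killed by $\iota^A\iota_A=o^Ao_A=0$. This yields the universal identities
\begin{align*}
 \C^{A'}\tilde\C_{A'}=\tfrac12\Box^{\C}+\iota^Bo^C\Theta_{BC},
 \qquad
 \tilde\C^{A'}\C_{A'}=-\tfrac12\Box^{\C}+o^B\iota^C\Theta_{BC},
\end{align*}
where $\Theta_{BC}:=\C_{(B}{}^{A'}\C_{C)A'}$ is the self-dual curvature of $\C$ on weighted scalars. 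Hence both assertions in \eqref{CSD} collapse to the single statement that the mixed component $\iota^{(B}o^{C)}\Theta_{BC}$ vanishes: the first gives $\C^{A'}\tilde\C_{A'}=\tfrac12\Box^{\C}$, and since $[\C_{A'},\tilde\C^{A'}]=-2\iota^{(B}o^{C)}\Theta_{BC}$ the second follows likewise. The final, conceptual step is that $\Theta_{BC}$ is the self-dual part of the curvature $2$-form $\d(wf+pP)$ of the weighted connection which, by the construction of $\C$ in \cite{A18,A21}, is linear in the self-dual Weyl spinor $\Psi_{ABCD}$ contracted into the dyad (the conformally non-invariant scalar-curvature contributions cancelling by design); its mixed component is proportional to the middle Weyl scalar $\Psi_{ABCD}o^A\iota^Bo^C\iota^D=\Psi_2$. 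The condition \eqref{SDWeyl} sets $\Psi_{ABCD}=0$, so $\iota^{(B}o^{C)}\Theta_{BC}=0$ and \eqref{CSD} follows without any restriction on the Ricci tensor. This last structural input --- that the self-dual curvature of $\C$ on weighted scalars carries no scalar-curvature term, so that conformal self-duality alone suffices --- is the main obstacle, consistent with Remark~\ref{Rem:propertiesofC}, where the $oo$- and $\iota\iota$-components of $\Theta_{BC}$ already vanish under mere algebraic speciality while the mixed $\Psi_2$-component requires full self-duality.
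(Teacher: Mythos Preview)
Your proposal is correct. Parts (2) and (4) coincide with the paper's argument --- the paper in fact merely cites \cite{A21} for the curvature identity you unpack in (4), so your explicit spinorial computation of $\Theta_{BC}$ and its mixed component is more self-contained. Two genuine differences are worth flagging. In part (1), the paper replaces your symmetric null decomposition by the asymmetric $g^{ab}=2(\ell^an^b-m^a\tilde m^b)-\i J^{ab}$ and uses the identity $\C_aJ^{ab}=0$ (immediate from \eqref{Cdyad}, valid without integrability); this halves the number of second-order products and makes the factorised form of \eqref{BoxCNP} emerge directly, whereas your route requires either computing all four cross terms or invoking the GHP prime. In part (3), the paper works in the original frame and expands $\Box^{\C}$ using the explicit Lee form $f_a=\partial_a\log\phi$ (reducing to a lemma in \cite{A18}), while you pass to the K\"ahler gauge where $\hat f_a=0$ so that $\Box^{\C}$ collapses to $\hat\Box$ on weight-$(w,0)$ scalars, then pull back via the conformal Laplacian. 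Your route is more conceptual but, as you acknowledge, the $2\Psi_2$ term still has to be extracted by combining the K\"ahler relation between $\hat\Psi_2$ and $\hat R$ with the conformal weight of $\Psi_2$ and the $\hat R$-term from the conformal transformation of $\hat\Box$; the paper's direct expansion sidesteps this bookkeeping entirely.
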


\begin{proof}
For the first item, note that $\Box^{\C}=\C_{a}(g^{ab}\C_{b})$ and write the metric as $g^{ab}=2(\ell^a n^b - m^{a}\tilde{m}^{b}) - \i J^{ab}$, where $\ell^a,n^a,m^a,\tilde{m}^a$ is the null tetrad determined by the principal spinors $o^A,\iota^A$ of $J^{a}{}_{b}$ and an arbitrarily chosen primed spin frame (the operator \eqref{BoxC} is independent of this choice). From \eqref{Cdyad}, we always have $\C_{a}J^{ab}=0$. Expressions for $\C_{a}\ell^{a}$ and $\C_{a}m^{a}$ are easily obtained from \eqref{Cvector}, and for the contractions of $\C_{a}$ with the null tetrad one uses \eqref{Cscalar}.

The second item follows from \eqref{BoxCNP} by comparing this expression to \eqref{TeukEq} and noticing that on a Hermitian manifold, $\kappa=\sigma=0=\kappa'=\sigma'$ (cf. item \ref{item-integrableJ} in remark \ref{Rem:propertiesofC}).

For the third item, the proof is very similar to the proof of Lemma 3.2 in \cite{A18}, by using the fact that in a conformally K\"ahler manifold, the Lee form is $f_a=\partial_a\log\phi$.

Finally, the fourth item follows from equations (B.12), (B.13) and (B.26) in \cite{A21}, after noticing that a conformally self-dual geometry has $\Psi_2 = 0$.
\end{proof}

The above identities for the operator \eqref{BoxC} motivate our assertion in the introduction concerning the relation of eq. \eqref{BoxCIntro} to other equations in the literature. Since in this work we are interested in geometries with $\Psi_2=0$, we make the following definition:

\begin{definition}\label{Def:TeukEq}
Let $\chi$ be a weighted scalar field, with conformal weight $w$ and GHP weight $p$. For a conformally self-dual manifold $(\M,g_{ab})$, the second order equation 
\begin{align}
 \Box^{\C}\chi = 0 \label{TeukEqBoxC}
\end{align}
will be called the {\em Teukolsky equation for weights $(w,p)$}. 
\end{definition}

\subsection{The twistor distribution}
\label{Sec:TD}

For any Riemannian 4-manifold $(\mathcal{M},g_{ab})$, we define the correspondence space $\mathcal{F}$ to be the projective (unprimed) spin bundle, so that locally $\mathcal{F}$ is identified with $\M\times\CP^1$. A local section of $\mathcal{F}$ is a spinor field $\lambda^{A}$ up to scale. 
A choice of compatible almost-complex structure $J^{a}{}_{b}$ gives two preferred local trivialisations of $\mathcal{F}$, in the form of the spinor fields $o^{A},\iota^{A}$ subject to \eqref{GHP}. Any local section $\lambda^A$ can be written as 
\begin{align}\label{components}
\lambda^A = \lambda_{1}o^{A} - \lambda_{0}\iota^A,
\end{align}
where $\lambda_{0}=\lambda_{A}o^{A}$, $\lambda_{1}=\lambda_{A}\iota^{A}$. We can use $(\lambda_{0},\lambda_{1})$ as homogeneous coordinates in the fibres of $\mathcal{F}$. Note, however, that due to the freedom \eqref{GHP}-\eqref{CT}, the coordinates $(\lambda_{0},\lambda_{1})$ have weights: if $\lambda_{A}$ is conformally invariant, then 
\begin{align}\label{weightscoord}
 w(\lambda_{0})=-\frac{1}{2}, \quad p(\lambda_{0})=1,  
 \qquad 
 w(\lambda_{1})=-\frac{1}{2}, \quad p(\lambda_{1})=-1.
\end{align}

Any point $(x,\lambda_0,\lambda_1)$ in $\mathcal{F}$ defines a two-dimensional subspace $\{\lambda^{A}\partial_{AA'}\}\subset T_{x}\M$, called a twistor plane. Using the spin connection, these vectors can be lifted horizontally to $T_{(x,\lambda)}\mathcal{F}$; the resulting two-dimensional distribution in $\mathcal{F}$ is called the twistor distribution (cf. \cite[Section 13.3]{MW} and \cite[Section 10.1]{DunajskiBook}): it can be written abstractly as 
\begin{align}\label{TD}
 \lambda^{A}L_{AA'} = \lambda^{A}\partial_{AA'} - \Gamma_{AA'B}{}^{C}\lambda^{A}\lambda^{B}\tfrac{\partial}{\partial\lambda^{C}},
\end{align}
where $\Gamma_{AA'B}{}^{C}$ is the unprimed spin connection 1-form\footnote{Recall that given a spin frame $\varepsilon^{A}_{i}$, with dual frame $\theta^{i}_{A}$, the spin connection 1-form is the matrix-valued 1-form $\Gamma_{AA' i}{}^{j}$ defined by $\nabla_{AA'}\varepsilon^{C}_{i}=\Gamma_{AA' i}{}^{j}\varepsilon^{C}_{j}$. The object \eqref{spinconnection} is $\Gamma_{AA' B}{}^{C}:=\theta^{i}_{B}\nabla_{AA'}\varepsilon^{C}_{i}$ (sum over $i$).},
\begin{align}\label{spinconnection}
 \Gamma_{AA'B}{}^{C} = o_{B}\nabla_{AA'}\iota^{C} - \iota_{B}\nabla_{AA'}o^{C}.
\end{align}
The twistor distribution is integrable iff $(\M,g_{ab})$ is conformally self-dual (eq. \eqref{SDWeyl}, or in spinor terms $\Psi_{ABCD}\equiv 0$), see e.g. \cite[Proposition 13.4.1]{MW}. In this case, \eqref{TD} defines a foliation of the correspondence space $\mathcal{F}$ by integral surfaces. The space of leaves of the foliation is a three-dimensional complex manifold called twistor space: 
\begin{align}\label{Def:PT}
 \mathcal{PT} \equiv \mathcal{F} \diagup \{\lambda^{A}L_{AA'}\}.
\end{align}

The projection into $\M$ of each integral surface in $\mathcal{F}$ is a twistor surface, so $\mathcal{PT}$ can also be viewed as the moduli space of twistor surfaces in $\M$. Recalling the discussion below \eqref{SFR}, the existence of twistor surfaces defined by independent spinor fields implies that, locally, one can find integrable almost-complex structures on $\M$.

\section{Twistor Theory}
\label{Sec:TwistorTheory}

\subsection{The correspondence space}
\label{Sec:CorrespondenceSpace}

Let $(\M,g_{ab},J^{a}{}_{b})$ be almost-Hermitian, with the almost-complex structure given by \eqref{J}. Let $\mathcal{F}$ be the correspondence space as in section \ref{Sec:TD}. The local trivialisations defined by $o^A,\iota^A$ give a covering of $\mathcal{F}$ by two open sets, $U_{0}=\{\lambda_{0}\neq0\}$ and $U_{1}=\{\lambda_{1}\neq0\}$. The variables
\begin{align}\label{zeta}
\zeta=\lambda_{1}/\lambda_{0}, \qquad \tilde\zeta=\lambda_{0}/\lambda_{1},
\end{align} 
are inhomogeneous coordinates on the fibres over $U_0$ and over $U_1$ respectively. On the intersection $U_0\cap U_1$ we can use either of them; the relation is $\tilde\zeta=\zeta^{-1}$.

The tautological bundle $\mathcal{O}(-1)\to\mathcal{F}$ is the line bundle whose fibre over a point is the set of multiples of the corresponding projective spinor. It is determined by the transition function $\zeta$ from $U_1$ to $U_0$. Similarly, the bundles $\mathcal{O}(k)$, $k\in\mathbb{Z}$, are determined by the transition function $\zeta^{-k}$. A holomorphic section of $\mathcal{O}(k)$ is a function that is holomorphic and homogeneous of degree $k$ in $(\lambda_{0},\lambda_{1})$, that is $\Psi(x,c\lambda_{0},c\lambda_{1})=c^{k}\Psi(x,\lambda_{0},\lambda_{1})$ for any non-zero complex number $c$.
We are interested in holomorphic sections of $\mathcal{O}(k)$ over the intersection $U_0\cap U_1$; the set of such sections is denoted $\Gamma(U_0\cap U_1,\mathcal{O}(k))$.

In the trivialisation over $U_0$, we can write $\Psi(x,1,\zeta)=(\lambda_{0})^{-k}\Psi(x,\lambda_{0},\lambda_{1})$, which then gives $\Psi(x,\lambda_{0},\lambda_{1})=(\lambda_{0})^{k}f(x,\zeta)$, where $f(x,\zeta):=\Psi(x,1,\zeta)$. So $\Psi$ is represented by $f$ over $U_0$, and similarly it is represented by $\tilde{f}=\zeta^{-k}f$ over $U_1$. Noticing that $U_0\cap U_1$ is an annular region in $\CP^1$, and that the function $f$ is holomorphic there, it follows that $f$ admits a Laurent series expansion in $\zeta$:
\begin{align}\label{Laurent0}
 f(x,\zeta) = \sum_{j=-\infty}^{\infty}\psi_{j}(x)\zeta^{j}, 
 \qquad 
 \psi_{j}(x) = \frac{1}{2\pi\i}\oint_{\Gamma}\frac{f(x,\zeta)\d\zeta}{\zeta^{j+1}},
\end{align}
where the expression for $\psi_{j}$ is the standard formula for the coefficients of a Laurent series, see e.g. \cite{Sarason}, and $\Gamma$ is any contour in $\CP^1$ separating the points $\zeta=0$ and $\zeta=\infty$. (Since $f$ is holomorphic inside the annulus $U_0\cap U_1$, the integral is insensitive to continuous deformations of the contour.) So any $\Psi\in\Gamma(U_0\cap U_1,\mathcal{O}(k))$ can be expressed (in the trivialisation over $U_0$) as
\begin{align}\label{Laurent}
 \Psi(x,\lambda_{0},\lambda_{1}) 
 = (\lambda_0)^{k}\sum_{j=-\infty}^{\infty}\psi_{j}(x)\frac{(\lambda_{1})^{j}}{(\lambda_{0})^{j} }.
\end{align}

\begin{remark}
The coefficients $\psi_{j}$ are weighted scalar fields on $\M$: 
since $\Psi$ must have zero weights, using \eqref{weightscoord} and \eqref{Laurent} we deduce that the weights of $\psi_{j}$ are 
\begin{align}\label{weightscoefficients}
 w(\psi_{j}) = k/2, \qquad p(\psi_{j}) = 2j-k.
\end{align}
\end{remark}

The main result of this subsection is the following:

\begin{proposition}\label{prop:Hermitian}
Let $(\M,g_{ab},J^{a}{}_{b})$ be a Hermitian manifold, let $\lambda^{A}L_{AA'}$ be the twistor distribution \eqref{TD}, and let $\C_{AA'}$ be the connection defined in \eqref{DefC}. Then for any section $\Psi\in\Gamma(U_0\cap U_1,\mathcal{O}(k))$, with Laurent expansion \eqref{Laurent}, the following identity holds:
\begin{align}
 \lambda^{A}L_{AA'}\Psi = 
 (\lambda_0)^k \sum_{j=-\infty}^{\infty}\frac{(\lambda_{1})^{j}}{(\lambda_{0})^{j} }\lambda^{A}\C_{AA'}\psi_{j}.
 \label{TDH}
\end{align}
\end{proposition}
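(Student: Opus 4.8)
The plan is to compute $\lambda^A L_{AA'}$ applied to the Laurent representative term by term and show that, after using the definition of $\C_{AA'}$ on the coefficients $\psi_j$ (which carry the weights recorded in \eqref{weightscoefficients}), the connection terms $w f_a + p P_a$ and the extra spinor term $-f_{A'B}\varphi_A$ in \eqref{DefC} reproduce exactly the $\lambda$-derivative piece $-\Gamma_{AA'B}{}^C\lambda^A\lambda^B\partial/\partial\lambda^C$ of the twistor distribution \eqref{TD}. Concretely, I would first write everything in the trivialisation over $U_0$ using $\zeta=\lambda_1/\lambda_0$, so that $\Psi=(\lambda_0)^k f(x,\zeta)$ with $f=\sum_j\psi_j\zeta^j$, and split $\lambda^A L_{AA'}$ into its horizontal part $\lambda^A\partial_{AA'}$ acting on the $x$-dependence and the vertical part acting on $\zeta$ (equivalently on $\lambda^C$). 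The horizontal part acting on $\psi_j(x)$ produces $\lambda^A\nabla_{AA'}\psi_j$; the task is to show the remaining terms assemble into the connection corrections $w(\psi_j) f_{AA'}+p(\psi_j) P_{AA'}$ with $w(\psi_j)=k/2$, $p(\psi_j)=2j-k$.

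The key computation is to evaluate $-\Gamma_{AA'B}{}^C\lambda^A\lambda^B\partial_{\lambda^C}$ on $(\lambda_0)^k(\lambda_1)^j(\lambda_0)^{-j}$. Using \eqref{components}, $\lambda_0=\lambda_A o^A$ and $\lambda_1=\lambda_A\iota^A$, so $\partial\lambda_0/\partial\lambda^C = o_C$ and $\partial\lambda_1/\partial\lambda^C = \iota_C$ (up to the usual index conventions). Then $\lambda^B\partial_{\lambda^C}$ hitting the monomial yields a linear combination of $o_C$ and $\iota_C$ with coefficients proportional to $(k-j)$ and $j$ respectively, contracted against $\Gamma_{AA'B}{}^C\lambda^A\lambda^B$. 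Now \eqref{spinconnection} gives $\Gamma_{AA'B}{}^C=o_B\nabla_{AA'}\iota^C-\iota_B\nabla_{AA'}o^C$, so $\Gamma_{AA'B}{}^C\lambda^B = \lambda_1\nabla_{AA'}\iota^C - \lambda_0\nabla_{AA'}o^C$ (after writing $\lambda^B$ in the $o,\iota$ basis and using $o_A\iota^A=1$), and contracting with $o_C$ and $\iota_C$ produces precisely the GHP connection form $\omega_{AA'}=\iota_B\nabla_{AA'}o^B$ together with the terms $o^B o^C\nabla_{AA'}o_B$ and $\iota^B\iota^C\nabla_{AA'}\iota_B$ — which vanish by the shear-free/integrability equations \eqref{SFR} (equivalently $\sigma^0_{A'}=\sigma^1_{A'}=0$, cf. item \ref{item-integrableJ} of remark \ref{Rem:propertiesofC}). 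This is the step where Hermiticity is essential: on a non-integrable $J$ there would be leftover $\sigma^0,\sigma^1$ terms spoiling the identity.

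Having reduced the vertical term to $\omega_{AA'}$-contributions, I would then match against $\C_{AA'}\psi_j = \nabla_{AA'}\psi_j + (w f_{AA'}+p P_{AA'})\psi_j$ with the weights from \eqref{weightscoefficients}. The remaining bookkeeping is to check that the $f_a$ piece emerges correctly: the horizontal lift is defined via $\nabla$, not via $\C$, so the Lee form $f_a=-\tfrac12 J^b{}_c\nabla_b J^c{}_a$ must reappear through the $\lambda^A\partial_{AA'}$ action once one rewrites $(\lambda_0)^{k-j}(\lambda_1)^j$ in terms of $o^A,\iota^A$ and lets $\partial_{AA'}$ differentiate $o^A$ and $\iota^A$ as well — or, more cleanly, by invoking the definition of $P_a$ in \eqref{ConnectionForms} which already packages $\omega_a$ and the $J^b{}_a f_b$ term together. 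Concretely I expect the computation to show $\lambda^A L_{AA'}\Psi = (\lambda_0)^k\sum_j \zeta^j\,\lambda^A[\nabla_{AA'}\psi_j + (\tfrac{k}{2} f_{AA'} + (2j-k)P_{AA'})\psi_j]$, which is \eqref{TDH}. I anticipate the main obstacle is precisely this matching of $f_a$-terms: keeping straight how the conformal weight enters through the implicit $o^A,\iota^A$-dependence of the monomials versus how it enters through $\C$, and verifying the numerical coefficients ($\tfrac{k}{2}$ for $w$, $2j-k$ for $p$) come out consistent with \eqref{weightscoefficients}. Once the vertical term is handled via \eqref{SFR}, the rest is a careful but routine unwinding of \eqref{DefC}, \eqref{ConnectionForms}, and \eqref{spinconnection}.
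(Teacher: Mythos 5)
Your overall skeleton matches the paper's proof: expand $\Psi$ in the Laurent series, let the horizontal part act on $\psi_j$, compute the vertical term via $\partial\lambda_0/\partial\lambda^C$, $\partial\lambda_1/\partial\lambda^C$ and \eqref{spinconnection}, kill the shear terms using integrability, and match against $\lambda^A\C_{AA'}\psi_j$ with the weights \eqref{weightscoefficients}. However, there is a genuine gap at the central step. You claim that contracting $\Gamma_{AA'B}{}^{C}\lambda^A\lambda^B$ with $o_C$ and $\iota_C$ produces only $\omega_{AA'}$ plus shear terms that vanish by \eqref{SFR}. That is false: the contractions $o_C\nabla_{AA'}o^C$ and $\iota_C\nabla_{AA'}\iota^C$ decompose (over the unprimed index $A$) into the shears $\sigma^0_{A'},\sigma^1_{A'}$ \emph{and} pieces proportional to $o^Cf_{CA'}$, $\iota^Cf_{CA'}$ built from the $\rho,\tau$-type spin coefficients. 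Only the shear parts are killed by Hermiticity; the Lee-form parts survive, and they are exactly what is needed to reproduce the $wf_{AA'}+pP_{AA'}$ corrections in $\C_{AA'}$ (recall $P_a=\omega_a-\tfrac{\i}{2}J^{b}{}_{a}f_b$, so the components of $P_a$ along the tetrad differ from those of $\omega_a$ by $\pm\tfrac12 f$-terms). The paper encodes this in the identities \eqref{IdentitiesSpinConnection}, e.g. $\Gamma_{AA'B}{}^{C}o_{C}=o_B\omega_{AA'}+o_A\iota_B\,o^Cf_{CA'}-\iota_A\iota_B\sigma^0_{A'}$, which is the key lemma your argument is missing.

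Your fallback explanation for where the $f_a$-terms come from — letting $\lambda^A\partial_{AA'}$ differentiate the implicit $o^A,\iota^A$-dependence of the monomials $(\lambda_0)^{k-j}(\lambda_1)^j$ — does not work in this formulation: in the Laurent-coefficient setup $(\lambda_0,\lambda_1)$ are fibre coordinates held fixed by the horizontal derivative, which therefore acts only on $\psi_j(x)$ (this is also required for \eqref{TDH} to take the stated form, with $\lambda^A\C_{AA'}$ acting on the coefficients alone). So with your accounting the vertical term would yield only $\omega$-contributions of strength $2j-k$, and the identity \eqref{TDH} would simply fail to hold, since $\lambda^A\C_{AA'}\psi_j$ genuinely contains Lee-form terms weighted by $w(\psi_j)=k/2$. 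You correctly sense that the $f_a$ bookkeeping is the crux ("the main obstacle"), but you neither carry it out nor identify the correct mechanism; as written, the proposal asserts a vanishing that does not occur and leaves the essential matching unproved. (There are also sign slips — $\partial\lambda_0/\partial\lambda^C=-o_C$, and $\Gamma_{AA'B}{}^{C}\lambda^B=-\lambda_0\nabla_{AA'}\iota^C+\lambda_1\nabla_{AA'}o^C$ — which are minor by themselves but indicate the contraction was not actually computed.)
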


\begin{proof}
We have 
\begin{align*}
\lambda^{A}L_{AA'}\Psi = \sum_{j=-\infty}^{\infty}
 \left[ (\lambda_{0})^{k-j}(\lambda_{1})^{j}\lambda^A\partial_{AA'}\psi_{j}
  - \psi_{j}\Gamma_{AA'B}{}^{C}\lambda^A\lambda^B\tfrac{\partial}{\partial\lambda^C}
 \left((\lambda_{0})^{k-j}(\lambda_{1})^{j}\right)
 \right].
\end{align*}
From \eqref{components} we deduce $\frac{\partial\lambda_0}{\partial\lambda^C} = -o_C$,  $\frac{\partial\lambda_1}{\partial\lambda^C} = - \iota_C$, so we find that $\lambda^{A}L_{AA'}\Psi$ is equal to
\begin{align*}
\sum_{j=-\infty}^{\infty}(\lambda_{0})^{k-j}(\lambda_{1})^{j}
 \left[ \lambda^A\partial_{AA'}
 + \left( \frac{(k-j)}{\lambda_0}\Gamma_{AA'B}{}^{C}\lambda^A\lambda^Bo_C 
 + \frac{j}{\lambda_1}\Gamma_{AA'B}{}^{C}\lambda^A\lambda^B\iota_{C}
 \right) \right]\psi_{j}.
\end{align*}
To compute the connection terms, we express the spin connection in terms of the objects $f_{AA'},\omega_{AA'},\sigma^{0}_{A'},\sigma^{1}_{A'}$. Recalling \eqref{spinconnection}, a computation using the definitions \eqref{ConnectionForms}, \eqref{Cdyad} gives
\begin{subequations}\label{IdentitiesSpinConnection}
\begin{align}
 \Gamma_{AA'B}{}^{C}o_{C} &= o_{B}\omega_{AA'}+o_A\iota_Bo^Cf_{CA'} - \iota_A\iota_B\sigma^{0}_{A'}, \\
 \Gamma_{AA'B}{}^{C}\iota_{C} &= -\iota_{B}\omega_{AA'}-\iota_Ao_B\iota^Cf_{CA'} - o_Ao_B\sigma^{1}_{A'}.
\end{align}
\end{subequations}
Therefore: 
\begin{align*}
\Gamma_{AA'B}{}^{C}\lambda^A\lambda^Bo_C &= -(\lambda_1)^2\sigma^{0}_{A'} 
- \lambda_0\lambda_1o^A(\omega_{AA'}-f_{AA'}) + (\lambda_0)^2\iota^A\omega_{AA'}, \\
\Gamma_{AA'B}{}^{C}\lambda^A\lambda^B\iota_C &=  (\lambda_1)^2 o^A\omega_{AA'} 
- \lambda_0\lambda_1\iota^A(\omega_{AA'}+f_{AA'}) - (\lambda_0)^2\sigma^{1}_{A'}.
\end{align*}
So after a straightforward calculation we arrive at
\begin{equation*}
\begin{aligned}
\lambda^{A}L_{AA'}\Psi = \sum_{j=-\infty}^{\infty} & (\lambda_{0})^{k-j}(\lambda_{1})^{j}
\left\{  \lambda^A\partial_{AA'} + \lambda_1 o^A[(2j-k)\omega_{AA'}+(k-j) f_{AA'}]  \right. \\
& \left. -\lambda_0\iota^A[(2j-k)\omega_{AA'}+jf_{AA'}] 
+ \tfrac{(j-k)(\lambda_1)^2}{\lambda_0}\sigma^0_{A'} - \tfrac{j(\lambda_0)^2}{\lambda_1}\sigma^1_{A'} \right\}\psi_{j}.
\end{aligned}
\end{equation*}
For an arbitrary scalar field $\psi$ with weights $(w,p)$, using \eqref{components}, \eqref{Cscalar} and \eqref{ConnectionForms} we get
\begin{align*}
\lambda^{A}\C_{AA'}\psi = \{\lambda^{A}\partial_{AA'} 
+ \lambda_{1}o^{A}[p\omega_{AA'}+(w-\tfrac{p}{2})f_{AA'}]
- \lambda_{0}\iota^{A}[p\omega_{AA'}+(w+\tfrac{p}{2})f_{AA'}]\}\psi.
\end{align*}
From remark \ref{Rem:propertiesofC}, on a Hermitian manifold we have $\sigma^0_{A'}=\sigma^1_{A'} = 0$. 
Recalling now that $\psi_{j}$ has weights as in \eqref{weightscoefficients}, the result follows.
\end{proof}

\subsection{Twistor functions and recursion relations}
\label{Sec:RecursionRelations}

We now assume that $(\M,g_{ab})$ is conformally self-dual, eq. \eqref{SDWeyl}. From section \ref{Sec:TD}, the twistor distribution \eqref{TD} is integrable, and there is a twistor space $\mathcal{PT}$ \eqref{Def:PT}. Similarly to $\mathcal{F}$, this can be covered by two open sets $V_0, V_1$, which we may take to be the images of $U_0, U_1$ under the projection $\mathcal{F}\to\mathcal{PT}$. The analogue of the bundles $\mathcal{O}(k)$ over $\mathcal{F}$ but now over $\mathcal{PT}$ will also be denoted by $\mathcal{O}(k)$ (those over $\mathcal{F}$ are the pull-back of those over $\mathcal{PT}$). A local section $\Psi\in\Gamma(V_0\cap V_1,\mathcal{O}(k))$ can be thought of as a local section\footnote{We slightly abuse notation and use the same letter $\Psi$ for both spaces.} $\Psi\in\Gamma(U_0\cap U_1,\mathcal{O}(k))$ which is constant along the twistor distribution:
\begin{align}\label{TF}
 \lambda^{A}L_{AA'}\Psi=0.
\end{align}
We call such functions {\em twistor functions}.

\begin{proposition}\label{prop:TF1}
Let $\Psi\in\Gamma(V_0\cap V_1,\mathcal{O}(k))$ be a twistor function. Then the coefficients $\psi_{j}$ in the associated Laurent series \eqref{Laurent} solve the Teukolsky equation for weights $(\frac{k}{2},2j-k)$, eq. \eqref{TeukEqBoxC}. The coefficients can be recovered from the twistor function $\Psi$ by a contour integral:
\begin{align}\label{ContourIntegral}
 \psi_{j}(x) = \frac{1}{2\pi\i}\oint_{\Gamma}\frac{\Psi(x,\lambda)\lambda_{A}\d\lambda^A}{(o_B\lambda^B)^{k-j+1}(\iota_B\lambda^B)^{j+1}}.
\end{align}
\end{proposition}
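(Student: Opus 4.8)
The plan is to combine Proposition~\ref{prop:Hermitian} with the twistor equation \eqref{TF}, and then to identify the Laurent coefficients $\psi_j$ extracted from \eqref{Laurent} with the contour integral \eqref{ContourIntegral}. First I would apply Proposition~\ref{prop:Hermitian}: since $(\M,g_{ab})$ is conformally self-dual it is in particular Hermitian with respect to the chosen $J^a{}_b$, so identity \eqref{TDH} applies and gives
\begin{align*}
 0 = \lambda^{A}L_{AA'}\Psi = (\lambda_0)^k \sum_{j=-\infty}^{\infty}\frac{(\lambda_1)^j}{(\lambda_0)^j}\,\lambda^{A}\C_{AA'}\psi_j.
\end{align*}
Because the functions $\zeta^j = (\lambda_1/\lambda_0)^j$ are linearly independent in the annulus $U_0\cap U_1$, and because $\lambda^A = \lambda_1 o^A - \lambda_0\iota^A$ lets us separate the $o^A$- and $\iota^A$-components, matching powers of $\zeta$ forces $\lambda^{A}\C_{AA'}\psi_j = 0$ as an identity in the two remaining free components of $\lambda^A$ for every $j$. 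Contracting with $o^{A'}$ and $\iota^{A'}$ (an arbitrary primed spin frame) and reading off the coefficients of $\lambda_0$ and $\lambda_1$ yields $\tilde{\C}_{A'}\psi_j = 0$ — no, more carefully: the vanishing of $\lambda^{A}\C_{AA'}\psi_j$ for all $\lambda^A$ gives $\C_{AA'}\psi_j$ symmetric-free, i.e. $o^A\C_{AA'}\psi_j$ and $\iota^A\C_{AA'}\psi_j$ are not both forced to vanish; rather, what I actually get from separating the two independent components is $\tilde{\C}_{A'}\psi_{j+1} = (\text{something})\cdot\C_{A'}\psi_j$, the recursion relation. The precise bookkeeping here — tracking which power of $\zeta$ the $o^A$- versus $\iota^A$-part of $\lambda^A\C_{AA'}\psi_j$ lands in — is the step I expect to be the main obstacle, since an off-by-one error propagates through everything.

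Granting the correct reading, the relations obtained will have the schematic form $\tilde\C_{A'}\psi_{j} + \C_{A'}\psi_{j-1} = 0$ (up to weight-dependent signs/constants), i.e. the twistor equation couples consecutive coefficients. To extract a \emph{single} equation for $\psi_j$, I would apply $\C^{A'}$ to the relation linking $\psi_{j+1}$ to $\psi_j$ and $\tilde\C^{A'}$ to the relation linking $\psi_j$ to $\psi_{j-1}$, and add them; the cross terms $\C^{A'}\tilde\C_{A'}\psi_j$ and $\tilde\C^{A'}\C_{A'}\psi_j$ combine, via the conformally-self-dual identities \eqref{CSD} (namely $\C^{A'}\tilde\C_{A'} = \tfrac12\Box^{\C}$ and $[\C_{A'},\tilde\C^{A'}]=0$), into $\Box^{\C}\psi_j$, while the outer terms $\C^{A'}\C_{A'}\psi_{j-1}$ and $\tilde\C^{A'}\tilde\C_{A'}\psi_{j+1}$ vanish by \eqref{DolbeaultSquared} (applicable since $J$ is integrable and $\Psi_{ABCD}\equiv 0$ is trivially algebraically special). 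Hence $\Box^{\C}\psi_j = 0$, which by \eqref{weightscoefficients} and Definition~\ref{Def:TeukEq} is precisely the Teukolsky equation for weights $(\tfrac{k}{2}, 2j-k)$.

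It remains to verify the contour integral formula \eqref{ContourIntegral}. Starting from the Laurent coefficient formula in \eqref{Laurent0}, $\psi_j(x) = \tfrac{1}{2\pi\i}\oint_\Gamma f(x,\zeta)\zeta^{-j-1}\d\zeta$ with $f(x,\zeta) = \Psi(x,1,\zeta) = (\lambda_0)^{-k}\Psi(x,\lambda_0,\lambda_1)$ and $\zeta = \lambda_1/\lambda_0$, I would rewrite the integral in homogeneous variables. Along the contour we may parametrise $\lambda_A$ with $\lambda_0 = o_B\lambda^B$ fixed (say $=1$) and $\lambda_1 = \iota_B\lambda^B = \zeta$, so $\d\zeta = \d\lambda_1$; the $\mathrm{SL}(2)$-invariant $1$-form is $\lambda_A\d\lambda^A = \lambda_0\,\d\lambda_1 - \lambda_1\,\d\lambda_0$, which on this slice equals $\d\zeta$. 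Substituting $f = (\lambda_0)^{-k}\Psi$ and $\zeta^{-j-1} = (\lambda_0)^{j+1}(\lambda_1)^{-j-1}$ and restoring homogeneity (the integrand is weight $0$, so the choice $\lambda_0 = 1$ is immaterial) gives the integrand $\Psi(x,\lambda)\,\lambda_A\d\lambda^A \big/ \big[(o_B\lambda^B)^{k-j+1}(\iota_B\lambda^B)^{j+1}\big]$, which is \eqref{ContourIntegral}. Finally, since $\Psi$ is holomorphic on $U_0\cap U_1$ (an annulus), the integral is invariant under homotopies of $\Gamma$ within that annulus, so the displayed contour condition "$\Gamma$ separates $\zeta=0$ from $\zeta=\infty$" is exactly what is needed.
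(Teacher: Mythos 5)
Your proposal follows the paper's own route: Proposition \ref{prop:Hermitian} together with $\lambda^{A}L_{AA'}\Psi=0$ yields the recursion relations $\tilde{\C}_{A'}\psi_{j}=\C_{A'}\psi_{j+1}$ (eq. \eqref{RecursionRelations}), from which \eqref{DolbeaultSquared} and \eqref{CSD} give $\Box^{\C}\psi_{j}=0$, and the contour formula is just the Laurent coefficient formula of \eqref{Laurent0} rewritten homogeneously using $f=(\lambda_0)^{-k}\Psi$ and $\lambda_{A}\d\lambda^{A}=(\lambda_0)^{2}\d(\lambda_1/\lambda_0)$ --- all exactly as in the paper. Two small remarks. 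First, the off-by-one direction you worry about is immaterial for this proposition: whether the relation couples $\tilde{\C}_{A'}\psi_{j}$ to $\C_{A'}\psi_{j+1}$ (the correct bookkeeping) or to $\C_{A'}\psi_{j-1}$, applying $\C^{A'}$ kills the $\C_{A'}$ side by \eqref{DolbeaultSquared} and turns $\C^{A'}\tilde{\C}_{A'}\psi_{j}$ into $\tfrac{1}{2}\Box^{\C}\psi_{j}$ by \eqref{CSD}. Second, your plan to hit two adjacent relations with $\C^{A'}$ and $\tilde{\C}^{A'}$ and then \emph{add} them is both unnecessary and sign-hazardous: the second identity in \eqref{CSD} gives $\tilde{\C}^{A'}\C_{A'}=\C_{A'}\tilde{\C}^{A'}=-\C^{A'}\tilde{\C}_{A'}=-\tfrac{1}{2}\Box^{\C}$, so the naive sum of the two cross terms cancels identically rather than producing $\Box^{\C}\psi_{j}$; a single relation hit with the single appropriate operator already yields $\Box^{\C}\psi_{j}=0$, which is what the paper does.
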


\begin{proof}
From proposition \ref{prop:Hermitian}, we see that if $\Psi$ satisfies \eqref{TF} then the right hand side of \eqref{TDH} must vanish. Using \eqref{components}, it follows that $o^{A}\C_{AA'}\psi_{j} - \iota^{A}\C_{AA'}\psi_{j+1}=0$ for all $j$, or equivalently using the operators \eqref{Dolbeault}:
\begin{align}\label{RecursionRelations}
 \tilde{\C}_{A'}\psi_{j} = \C_{A'}\psi_{j+1} \qquad \forall \, j.
\end{align}
Applying $\C^{A'}$ to this equation and using \eqref{DolbeaultSquared}, we see that $\C^{A'}\tilde{\C}_{A'}\psi_{j}=0$. Recalling now \eqref{CSD}, we get equation \eqref{TeukEqBoxC}. The  integral \eqref{ContourIntegral} follows from the expression for the coefficients in \eqref{Laurent0}, noticing that $f(x,\zeta)=(\lambda_{0})^{-k}\Psi(x,\lambda_0,\lambda_1)$ and $\lambda_A\d\lambda^A=(\lambda_0)^2\d(\frac{\lambda_1}{\lambda_0})$.
\end{proof}

Following the terminology of the twistor literature \cite{MW, DM}, we call \eqref{RecursionRelations} the {\em recursion relations}. They are connected to a symmetry operator for the Teukolsky equations, which we call the {\em recursion operator}:
\begin{proposition}\label{prop:RO}
Let $\mathcal{T}_{(w,p)}$ be the space of solutions to the Teukolsky equation for weights $(w,p)$. Then there is an operator $\mathcal{R}:\mathcal{T}_{(w,p)}\to \mathcal{T}_{(w,p+2)}$ defined, for any $\chi\in\mathcal{T}_{(w,p)}$, by 
\begin{align}
 \tilde{\C}_{A'}\chi = \C_{A'}\mathcal{R}\chi. \label{RecursionOp}
\end{align}
\end{proposition}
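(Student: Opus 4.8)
The plan is to show that equation \eqref{RecursionOp} admits a solution $\mathcal{R}\chi$ (unique up to elements of the kernel of $\C_{A'}$) whenever $\chi\in\mathcal{T}_{(w,p)}$, and that this solution again solves the Teukolsky equation, now for weights $(w,p+2)$. First I would establish \emph{existence}: given $\chi$ with $\Box^{\C}\chi=0$, consider the weighted spinor field $A_{A'}:=\tilde{\C}_{A'}\chi$. By \eqref{CSD} we have $\C^{A'}\tilde{\C}_{A'}\chi=\tfrac12\Box^{\C}\chi=0$, so $A_{A'}$ satisfies $\C^{A'}A_{A'}=0$. Since the geometry is conformally self-dual, $\Psi_{ABCD}\equiv0$, so the hypotheses of item \ref{item:potentials} of remark \ref{Rem:propertiesofC} are met, and hence there exists, locally, a scalar field $b$ with $A_{A'}=\C_{A'}b$. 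Setting $\mathcal{R}\chi:=b$ gives \eqref{RecursionOp}. One should check the weight bookkeeping: $\tilde{\C}_{A'}=o^A\C_{AA'}$ raises GHP weight by $+1$ relative to $\C_{AA'}$ (since $o^A$ has weight $(-\tfrac12,1)$), while $\C_{A'}=\iota^A\C_{AA'}$ lowers it by $1$; so if $\chi$ has weights $(w,p)$ then $\tilde{\C}_{A'}\chi$ has weights $(w-\tfrac12,p+1)$ — after accounting for the conformal weight of $\C_{AA'}$ — and matching this against $\C_{A'}b$ forces $b$ to have weights $(w,p+2)$, which is exactly the target space $\mathcal{T}_{(w,p+2)}$.

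Next I would verify that $\mathcal{R}\chi$ actually lies in $\mathcal{T}_{(w,p+2)}$, i.e. $\Box^{\C}(\mathcal{R}\chi)=0$. Using \eqref{CSD} again, $\tfrac12\Box^{\C}(\mathcal{R}\chi)=\C^{A'}\tilde{\C}_{A'}(\mathcal{R}\chi)$; the commutator identity $[\C_{A'},\tilde{\C}^{A'}]=0$ from \eqref{CSD} lets me swap the order, so $\C^{A'}\tilde{\C}_{A'}(\mathcal{R}\chi)=\tilde{\C}^{A'}\C_{A'}(\mathcal{R}\chi)=\tilde{\C}^{A'}\tilde{\C}_{A'}\chi$ by \eqref{RecursionOp}. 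But $\tilde{\C}^{A'}\tilde{\C}_{A'}=0$ on any weighted spinor field by \eqref{DolbeaultSquared} (valid since $J$ is integrable and $\Psi_{ABCD}\equiv0$ is trivially algebraically special). Hence $\Box^{\C}(\mathcal{R}\chi)=0$, as required. I would also remark that $\mathcal{R}$ is well-defined only up to the kernel of $\C_{A'}$ (the ambiguity in $b$), i.e. up to fields $c$ with $\tilde{\C}_{A'}$-... more precisely $\C_{A'}c=0$, which by the same potential lemma are of the form $c=\tilde{\C}_{A'}(\cdot)$-exact data; this ambiguity is the analogue of the gauge freedom in the twistor recursion and is harmless for the statement as phrased (which only asserts the existence of such an operator).

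The main obstacle is the \emph{existence/well-definedness} step — producing the potential $b$ from $A_{A'}=\tilde{\C}_{A'}\chi$. This is genuinely the crux, but it is not something to grind out here: it is precisely item \ref{item:potentials} of remark \ref{Rem:propertiesofC} (equivalently \cite[Lemma 2.4]{A21}), which supplies the local existence of $b$ given $\C^{A'}A_{A'}=0$ under the integrability-plus-algebraic-speciality hypotheses that our conformally self-dual assumption guarantees. So the proof reduces to: (i) check $\C^{A'}\tilde{\C}_{A'}\chi=0$ via \eqref{CSD}, (ii) invoke the potential lemma to define $\mathcal{R}\chi$, (iii) check weights, and (iv) verify $\Box^{\C}\mathcal{R}\chi=0$ via \eqref{CSD}, \eqref{DolbeaultSquared}. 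Everything is a short consequence of identities already assembled; the only subtlety worth flagging in the writeup is that $\mathcal{R}$ is defined only modulo $\ker\C_{A'}$.
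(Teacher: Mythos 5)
Your proposal is correct and follows essentially the same route as the paper's proof: you rewrite $\Box^{\C}\chi=0$ as $\C^{A'}\tilde{\C}_{A'}\chi=0$ via \eqref{CSD}, invoke the potential lemma of item \ref{item:potentials} in remark \ref{Rem:propertiesofC} to define $\mathcal{R}\chi$ locally, and verify $\Box^{\C}\mathcal{R}\chi=0$ using \eqref{DolbeaultSquared} together with the commutator identity in \eqref{CSD}. Your added remarks on the weight bookkeeping and on $\mathcal{R}$ being defined only modulo $\ker\C_{A'}$ are accurate but do not change the argument.
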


\begin{proof}
Let $\chi$ be a solution to the Teukolsky equation $\Box^{\C}\chi=0$ for weights $(w,p)$, so $\chi\in\mathcal{T}_{(w,p)}$. Using the first identity in \eqref{CSD}, this can be written as $\C^{A'}\tilde{\C}_{A'}\chi=0$. From item \ref{item:potentials} in remark \ref{Rem:propertiesofC}, it follows that there exists a scalar field $\tilde\chi$ such that $\tilde{\C}_{A'}\chi=\C_{A'}\tilde\chi$. The weights of $\tilde\chi$ are $(w,p+2)$. Applying $\tilde{\C}^{A'}$ and using \eqref{DolbeaultSquared}, we get $\tilde{\C}^{A'}\C_{A'}\tilde\chi=0$. Using the second and first identities in \eqref{CSD}, we have $\Box^{\C}\tilde\chi=0$, so $\tilde\chi\in\mathcal{T}_{(w,p+2)}$. We then set $\tilde\chi\equiv\mathcal{R}\chi$.
\end{proof}

With the help of the recursion operator $\mathcal{R}$, we have a converse to proposition \ref{prop:TF1}:
\begin{proposition}\label{prop:TF2}
Let $\chi$ be a solution to the Teukolsky equation for weights $(w,p)$, eq. \eqref{TeukEqBoxC}. Then $\chi$ generates a twistor function $\Psi\in\Gamma(V_0\cap V_1,\mathcal{O}(2w))$, and we have
\begin{align}\label{ContourIntegral2}
 \chi(x) = \frac{1}{2\pi\i}\oint_{\Gamma}\frac{\Psi(x,\lambda)\lambda_{A}\d\lambda^A}{(o_B\lambda^B)^{w+1-\frac{p}{2}}(\iota_B\lambda^B)^{w+1+\frac{p}{2}}}.
\end{align}
\end{proposition}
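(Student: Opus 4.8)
The plan is to invert the construction of Proposition~\ref{prop:TF1} using the recursion operator $\mathcal{R}$ of Proposition~\ref{prop:RO}. Starting from a solution $\chi$ of the Teukolsky equation for weights $(w,p)$, I would first set $\psi_{j_0} := \chi$ where $j_0$ is chosen so that the weights $(\tfrac{k}{2}, 2j-k)$ in \eqref{weightscoefficients} match $(w,p)$; since $w = k/2$ forces $k = 2w$, and $p = 2j_0 - k$ forces $j_0 = w + \tfrac{p}{2}$, the target bundle is $\mathcal{O}(2w)$ and the reference index is $j_0 = w + \tfrac{p}{2}$. Next I would build the full doubly-infinite family $\{\psi_j\}_{j\in\mathbb{Z}}$ by applying $\mathcal{R}$ repeatedly in the increasing direction, $\psi_{j+1} := \mathcal{R}\psi_j$ for $j \geq j_0$, and the inverse recursion in the decreasing direction, i.e.\ for $j < j_0$ choosing $\psi_j$ to satisfy $\tilde{\C}_{A'}\psi_j = \C_{A'}\psi_{j+1}$; Proposition~\ref{prop:RO} (and the analogous argument with the roles of $\tilde{\C}$ and $\C$ swapped, which is symmetric under $o \leftrightarrow \iota$) guarantees that each $\psi_j$ so produced again solves the Teukolsky equation, now for weights $(w, 2j-2w)$, and that the recursion relations \eqref{RecursionRelations} hold for all $j$.

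Having the sequence $\{\psi_j\}$, I would then assemble the candidate twistor function by the Laurent series \eqref{Laurent} with $k = 2w$, namely $\Psi(x,\lambda_0,\lambda_1) := (\lambda_0)^{2w}\sum_{j} \psi_j(x)\,(\lambda_1/\lambda_0)^{j}$, and verify two things: that it converges on a non-trivial annulus $U_0 \cap U_1$ (so that it is genuinely a section of $\mathcal{O}(2w)$ over the overlap), and that it is annihilated by the twistor distribution. The second point is immediate from Proposition~\ref{prop:Hermitian}: equation~\eqref{TDH} expresses $\lambda^A L_{AA'}\Psi$ as a Laurent series whose coefficients are precisely $o^A\C_{AA'}\psi_j - \iota^A\C_{AA'}\psi_{j+1} = \tilde{\C}_{A'}\psi_j - \C_{A'}\psi_{j+1}$, which vanish by the recursion relations, so $\Psi$ is a twistor function in the sense of \eqref{TF}, hence descends to $\Gamma(V_0\cap V_1, \mathcal{O}(2w))$. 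Finally, the contour integral formula \eqref{ContourIntegral2} follows by specializing \eqref{ContourIntegral} from Proposition~\ref{prop:TF1} to $j = j_0 = w + \tfrac{p}{2}$, which recovers $\psi_{j_0} = \chi$ with the exponents $k - j_0 + 1 = w + 1 - \tfrac{p}{2}$ and $j_0 + 1 = w + 1 + \tfrac{p}{2}$ in the denominator, exactly as claimed.

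The main obstacle I anticipate is convergence of the Laurent series defining $\Psi$: the recursion operator $\mathcal{R}$ is defined only implicitly (via solving $\tilde{\C}_{A'}\chi = \C_{A'}\mathcal{R}\chi$, which determines $\mathcal{R}\chi$ only up to a solution of $\C_{A'}\eta = 0$, i.e.\ up to an anti-holomorphic-type ambiguity), so a priori there is no control on the growth of $\|\psi_j\|$ as $|j| \to \infty$, and hence no guarantee that $\sum_j \psi_j \zeta^j$ has a positive radius of convergence in both directions. I expect the resolution to be local in nature --- one works on a sufficiently small neighbourhood in $\M$ and a correspondingly thin annulus in $\CP^1$, possibly appealing to the integrability of the twistor distribution and the holomorphicity it provides along the leaves --- and the statement should be read in this local-analytic sense (consistent with the ``locally'' qualifier already present in item~\ref{item:potentials} of Remark~\ref{Rem:propertiesofC}). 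One should also note that $\Psi$ is not unique: the freedom in $\mathcal{R}$ at each step, together with the freedom to add to $\Psi$ any section holomorphic across $U_0$ or across $U_1$, means the assignment $\chi \mapsto \Psi$ is only well-defined modulo coboundaries --- which is precisely the point taken up in the passage to \v{C}ech cohomology in Proposition~\ref{prop:cohomology}, and is why the present statement claims only existence of \emph{a} twistor function $\Psi$ satisfying \eqref{ContourIntegral2}, not canonicity.
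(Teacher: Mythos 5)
Your proposal follows essentially the same route as the paper: use the recursion operator $\mathcal{R}$ (and its formal inverse, obtained by rewriting the equation as $\tilde{\C}^{A'}\C_{A'}\chi=0$) to generate the doubly-infinite family $\{\mathcal{R}^{m}\chi\}$ satisfying the recursion relations, assemble the homogeneity-$2w$ Laurent series, conclude $\lambda^{A}L_{AA'}\Psi=0$ from Proposition~\ref{prop:Hermitian}, and read off \eqref{ContourIntegral2} as the case $k=2w$, $j=w+\tfrac{p}{2}$ of \eqref{ContourIntegral}. The convergence and non-uniqueness caveats you raise are reasonable but are not addressed in the paper either, which treats the series \eqref{GeneratedTF} formally, so they do not mark a divergence in method.
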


\begin{proof}
Given $\chi$, using the recursion operator \eqref{RecursionOp} we get a solution $\chi^{(1)}=\mathcal{R}\chi$ to the Teukolsky equation for weights $(w,p+2)$. Since $\tilde{\C}^{A'}\C_{A'}\chi^{(1)}=0$ and the operators commute, we can write $\C^{A'}\tilde{\C}_{A'}\chi^{(1)}=0$. Applying \eqref{potentials} again, there exists $\chi^{(2)}$ such that $\tilde{\C}_{A'}\chi^{(1)}=\C_{A'}\chi^{(2)}$, and we have $\chi^{(2)}=\mathcal{R}(\mathcal{R}\chi)\equiv\mathcal{R}^2\chi$. The weights of $\chi^{(2)}$ are $(w,p+4)$. Iterating this process we get a set $\{\mathcal{R}^{m}\chi\}_{m=0}^{\infty}$, where the weights of $\mathcal{R}^{m}\chi$ are $(w,p+2m)$. To get negative $m$'s, we start again from the Teukolsky equation for $\chi$ but now we write it as $\tilde{\C}^{A'}\C_{A'}\chi=0$, so there exists $\chi^{(-1)}$ such that $\C_{A'}\chi=\tilde{\C}_{A'}\chi^{(-1)}$. Comparing to \eqref{RecursionOp}, we can write $\chi=\mathcal{R}\chi^{(-1)}$, so formally $\chi^{(-1)}=\mathcal{R}^{-1}\chi$. Repeating the process, we end up with an infinite family, $\{\mathcal{R}^{m}\chi\}_{m=-\infty}^{\infty}$, of solutions to the Teukolsky equations, connected by the recursion operator \eqref{RecursionOp} or, equivalently, by the recursion relations $\tilde{\C}_{A'}\mathcal{R}^{m}\chi=\C_{A'}\mathcal{R}^{m+1}\chi$. We can then define the power series
\begin{align}\label{GeneratedTF}
\Psi(x,\lambda_{0},\lambda_{1}) := \sum_{m=-\infty}^{\infty}(\mathcal{R}^{m}\chi)(x)(\lambda_{0})^{w-\frac{p}{2}-m}(\lambda_{1})^{w+\frac{p}{2}+m}.
\end{align}
The function $\Psi$ has weights $(0,0)$, it is homogeneous of degree $2w$ in $(\lambda_0,\lambda_1)$, and, since the coefficients $\mathcal{R}^{m}\chi$ satisfy the recursion relations, we have $\lambda^{A}L_{AA'}\Psi=0$. Thus $\Psi$ is a twistor function in $\Gamma(V_0\cap V_1,\mathcal{O}(2w))$. The integral \eqref{ContourIntegral2} is simply \eqref{ContourIntegral} for $k=2w$, $j=w+\frac{p}{2}$.
\end{proof}

Proposition \ref{prop:TF2} tells us that any solution to the Teukolsky equation generates a twistor function, and proposition \ref{prop:RO} tells us that, given a solution $\chi$ to the Teukolsky equation for weights $(w,p)$, the recursion operator produces a solution $\tilde\chi=\mathcal{R}\chi$ to the equation for weights $(w,p+2)$. It is then natural to ask about the relation between the twistor functions generated by $\chi$ and by $\mathcal{R}\chi$. We have:

\begin{proposition}\label{prop:ROTF}
$\chi$ and $\mathcal{R}\chi$ generate the same twistor function.
\end{proposition}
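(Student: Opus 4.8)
The plan is to compare the two power-series definitions \eqref{GeneratedTF} directly. Write $\Psi_\chi$ for the twistor function generated by $\chi$ (weights $(w,p)$) and $\Psi_{\mathcal{R}\chi}$ for the one generated by $\mathcal{R}\chi$ (weights $(w,p+2)$). By definition,
\begin{align*}
\Psi_\chi(x,\lambda_0,\lambda_1) &= \sum_{m=-\infty}^{\infty}(\mathcal{R}^{m}\chi)(x)\,(\lambda_0)^{w-\frac{p}{2}-m}(\lambda_1)^{w+\frac{p}{2}+m},\\
\Psi_{\mathcal{R}\chi}(x,\lambda_0,\lambda_1) &= \sum_{n=-\infty}^{\infty}(\mathcal{R}^{n}(\mathcal{R}\chi))(x)\,(\lambda_0)^{w-\frac{p+2}{2}-n}(\lambda_1)^{w+\frac{p+2}{2}+n}.
\end{align*}
First I would note that $\mathcal{R}^{n}(\mathcal{R}\chi) = \mathcal{R}^{n+1}\chi$, which holds for all $n\in\mathbb{Z}$ since the recursion operator is invertible in the sense discussed in the proof of proposition \ref{prop:TF2} (the relation $\tilde{\mathscr{C}}_{A'}\mathcal{R}^m\chi = \mathscr{C}_{A'}\mathcal{R}^{m+1}\chi$ holds for all integers $m$, and $\mathcal{R}^{-1}$ is defined by exactly the reverse potential equation). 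Substituting this and relabelling the summation index $m = n+1$ in the second series, the exponents become $(\lambda_0)^{w-\frac{p+2}{2}-(m-1)} = (\lambda_0)^{w-\frac{p}{2}-m}$ and $(\lambda_1)^{w+\frac{p+2}{2}+(m-1)} = (\lambda_1)^{w+\frac{p}{2}+m}$, so the two series agree term by term.

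The only genuine point to check is that the shift $\mathcal{R}^{n}(\mathcal{R}\chi)=\mathcal{R}^{n+1}\chi$ is legitimate, i.e. that iterating the recursion operator is associative / well-defined independently of where one starts. This is immediate from the defining relation \eqref{RecursionOp}: the whole bi-infinite family $\{\mathcal{R}^m\chi\}_{m\in\mathbb{Z}}$ is characterised (up to the ambiguities inherent in solving \eqref{potentials}, which do not affect the construction) by the single chain of recursion relations $\tilde{\mathscr{C}}_{A'}\mathcal{R}^m\chi = \mathscr{C}_{A'}\mathcal{R}^{m+1}\chi$, and starting this chain from $\mathcal{R}\chi$ rather than from $\chi$ merely reindexes it by one. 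Hence the family attached to $\mathcal{R}\chi$ is $\{\mathcal{R}^{m+1}\chi\}_{m\in\mathbb{Z}}$, the same set of fields.

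The expected main obstacle is not the algebra — which is a one-line reindexing — but the bookkeeping around the weights: one must verify that the generated twistor function has the correct degree and weights regardless of whether it is built from $\chi$ or from $\mathcal{R}\chi$. Since $\mathcal{R}\chi$ has weights $(w,p+2)$, the formula \eqref{GeneratedTF} applied to it produces a function homogeneous of degree $2w$ with weights $(0,0)$, exactly as for $\chi$; the coefficient $\mathcal{R}^n(\mathcal{R}\chi)=\mathcal{R}^{n+1}\chi$ sits at the monomial $(\lambda_0)^{w-\frac{p}{2}-(n+1)}(\lambda_1)^{w+\frac{p}{2}+(n+1)}$, which is precisely the slot it occupies in $\Psi_\chi$. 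So the two power series are literally the same function on $U_0\cap U_1$, and since both satisfy $\lambda^A L_{AA'}\Psi=0$ they descend to the same element of $\Gamma(V_0\cap V_1,\mathcal{O}(2w))$. I would close by remarking that this is the expected outcome: the recursion operator moves along the fibre $\CP^1$, not across the twistor correspondence, so it cannot change the underlying twistor datum.
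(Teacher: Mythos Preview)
Your argument is correct and is essentially identical to the paper's own proof: both write out the series \eqref{GeneratedTF} for $\chi$ and for $\mathcal{R}\chi$, use $\mathcal{R}^{n}(\mathcal{R}\chi)=\mathcal{R}^{n+1}\chi$, and observe that a shift of the summation index makes the two series coincide term by term. The extra remarks you include about weights and well-definedness are fine but not needed beyond what the paper already states.
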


\begin{proof}
Let $\chi$ be a solution to the Teukolsky equation for weights $(w,p)$. The generated twistor function, say $\Psi$, is \eqref{GeneratedTF}. Likewise, the twistor function corresponding to $\tilde\chi=\mathcal{R}\chi$, say $\tilde\Psi$, is given by the same expression \eqref{GeneratedTF} but replacing $\chi$ by $\tilde\chi$ and $p$ by $p+2$, since the weights of $\tilde\chi$ are $(w,p+2)$. Thus:
\begin{align*}
\tilde\Psi(x,\lambda) &= \sum_{\ell=-\infty}^{\infty}(\mathcal{R}^{\ell}\tilde\chi)(x)(\lambda_{0})^{w-\frac{(p+2)}{2}-\ell}(\lambda_{1})^{w+\frac{(p+2)}{2}+\ell} \\
& = \sum_{\ell=-\infty}^{\infty}(\mathcal{R}^{\ell+1}\chi)(x)(\lambda_{0})^{w-\frac{p}{2}-\ell-1}(\lambda_{1})^{w+\frac{p}{2}+\ell+1}.
\end{align*}
The second line is simply \eqref{GeneratedTF} with $m=\ell+1$, so $\tilde\Psi=\Psi$.
\end{proof}

\subsection{Twistor cohomology}
\label{Sec:Cohomology}

From prop. \ref{prop:TF2}, any solution $\chi$ to the Teukolsky equation for weights $(w,p)$ can be expressed as the contour integral of a twistor function $\Psi$ as in \eqref{ContourIntegral2}. Suppose we make a transformation
\begin{align}\label{freedom}
 \Psi \to \Psi' = \Psi + h_0 + h_1,
\end{align}
where we assume that $h_i$ is a function such that its quotient by the denominator in \eqref{ContourIntegral2} is holomorphic in $V_i$, $i=0,1$. Then the result of the integral is unchanged. So, from the perspective of the contour integral, $\chi$ would generate not just a single twistor function but a class of them related by \eqref{freedom}, as long as we can find the required functions $h_0,h_1$.

To investigate whether such functions can be found, we will use sheaf cohomology, as is usual in twistor constructions. We will follow the \v{C}ech approach, see e.g. \cite{HuggettTod, MW, PR2, WW}. The first \v{C}ech cohomology group of $\PT$ with coefficients in $\mathcal{O}(k)$, denoted $\check{H}^{1}(\PT,\mathcal{O}(k))$, is the set of 1-cocycles modulo 1-coboundaries. In our case we only have two sets in the cover, so the cocycle condition is trivial, i.e. every cochain is a cocycle. Following \cite{MW} we can then write\footnote{Recall that $\Gamma(\mathcal{U},E)$ is the set of holomorphic sections of a bundle $E$ over an open set $\mathcal{U}$.} 
\begin{align}\label{H1}
\check{H}^{1}(\PT,\mathcal{O}(k)) = \frac{\Gamma(V_{0}\cap V_{1},\mathcal{O}(k))}{\Gamma(V_{0},\mathcal{O}(k))+\Gamma(V_{1},\mathcal{O}(k))}.
\end{align}
A cohomology class in $\check{H}^{1}(\PT,\mathcal{O}(k))$ is thus represented by a cochain, which we recall is a local holomorphic section of $\mathcal{O}(k)$ over $V_0\cap V_1$. So we can say that any such section, say $F$, is a \v{C}ech representative in $\check{H}^{1}(\PT,\mathcal{O}(k))$. The quotient in \eqref{H1} means that $F$ and $F'\equiv F + h_{0} + h_{1}$, where $h_{0}$ is holomorphic in $V_0$ and $h_{1}$ is holomorphic in $V_1$, represent the same cohomology class. Elements of the same class are said to be cohomologous. 

With the above terminology, we can replace the term `twistor function' in propositions \ref{prop:TF1} and \ref{prop:TF2} by `\v{C}ech representative'. We can also investigate the conditions under which $\chi$ generates not just a \v{C}ech representative but a cohomology class: that is, given $\chi$, we say that it ``generates a cohomology class'' if it is invariant under \eqref{freedom}.

\begin{proposition}\label{prop:cohomology}
Let $\chi$ be a solution to the Teukolsky equation for weights $(w,p)$. Then $\chi$ generates a cohomology class in $\check{H}^{1}(\PT,\mathcal{O}(2w))$ if and only if $w$ and $p$ satisfy
\begin{align}\label{Restrictionspw}
 w \pm \frac{p}{2} \in - \mathbb{N}.
\end{align}
\end{proposition}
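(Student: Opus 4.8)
The plan is to analyze when the contour integral \eqref{ContourIntegral2} is insensitive to the transformation \eqref{freedom}, by reducing everything to Laurent coefficients. Write $k=2w$ and $j=w+\frac{p}{2}$, so that the denominator in \eqref{ContourIntegral2} is $(o_B\lambda^B)^{k-j+1}(\iota_B\lambda^B)^{j+1}$ and the coefficient extracted is $\psi_j$ in the sense of \eqref{Laurent0}. A function $h_0$ holomorphic over $V_0$, divided by this denominator, contributes to the integral only through the Laurent coefficient of $\zeta^{j}$ (equivalently, the residue picks out the $(\lambda_0)^{k-j}(\lambda_1)^{j}$ term). Since $h_0$, in the trivialisation over $U_0$, is of the form $(\lambda_0)^k\sum_{m\ge 0}a_m\zeta^{m}$ (holomorphic at $\zeta=0$), its contribution to $\chi$ vanishes precisely when $j<0$, i.e. when $w+\frac{p}{2}\in-\mathbb{N}$ (taking $-\mathbb{N}=\{-1,-2,\dots\}$ as in the paper). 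Symmetrically, $h_1$ is holomorphic over $V_1$, so in the $U_1$-trivialisation it is $(\lambda_1)^k\sum_{m\ge 0}b_m\tilde\zeta^{m}$ with $\tilde\zeta=\zeta^{-1}$; its contribution to $\psi_j$ vanishes precisely when $k-j<0$, i.e. $w-\frac{p}{2}\in-\mathbb{N}$. Thus both contributions vanish exactly under \eqref{Restrictionspw}.

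First I would make precise the claim that $h_i/(\text{denominator})$ being holomorphic over $V_i$ is the same as $h_i$ being a holomorphic section of $\mathcal{O}(k)$ over $V_i$ in the sense of \eqref{H1}: indeed, the denominator is (up to the $\lambda_A\d\lambda^A$ factor which supplies the correct homogeneity) a holomorphic nonvanishing section of $\mathcal{O}(k+2)$ over $V_0\cap V_1$ whose only zeros/poles are at $\zeta=0$ (from $(\iota_B\lambda^B)$) and $\zeta=\infty$ (from $(o_B\lambda^B)$), so the condition of item \ref{item:cohomology}'s hypothesis on $h_i$ translates into $h_i\in\Gamma(V_i,\mathcal{O}(k))$. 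This identifies the freedom \eqref{freedom} in the integrand with the coboundary freedom in \eqref{H1}, so ``$\chi$ generates a cohomology class'' becomes ``$\chi$ is unchanged when $\Psi$ is modified by an element of $\Gamma(V_0,\mathcal{O}(k))+\Gamma(V_1,\mathcal{O}(k))$''.

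Next I would carry out the residue computation. Expanding $h_0=(\lambda_0)^k f_0(x,\zeta)$ with $f_0$ holomorphic in a disc $|\zeta|<r_1$ around $\zeta=0$, the contribution of $h_0$ to $\chi$ via \eqref{ContourIntegral2} is $\frac{1}{2\pi\i}\oint_\Gamma f_0(x,\zeta)\zeta^{-j-1}\d\zeta$, which is the $\zeta^{j}$-Taylor coefficient of $f_0$; this is automatically zero iff $j\le -1$, and can be made nonzero otherwise. Similarly, writing $h_1=(\lambda_1)^k\tilde f_1(x,\tilde\zeta)$ with $\tilde f_1$ holomorphic near $\tilde\zeta=0$, the relation $\tilde f_1=\zeta^{-k}f_1$ together with $\tilde\zeta=\zeta^{-1}$ gives the contribution of $h_1$ as the $\zeta^{j}$-coefficient of $\zeta^{k}\tilde f_1(x,\zeta^{-1})$, i.e. the $\tilde\zeta^{\,k-j}$-coefficient of $\tilde f_1$, which vanishes automatically iff $k-j\le -1$. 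Combining, $\chi$ is invariant under all admissible \eqref{freedom} iff $j\le -1$ and $k-j\le -1$, i.e. $w+\frac{p}{2}\le -1$ and $w-\frac{p}{2}\le -1$, which is exactly \eqref{Restrictionspw}. For the converse direction (that the condition is \emph{necessary}), I would exhibit, when say $j\ge 0$, an explicit $h_0$ — e.g. corresponding to $f_0=\zeta^{j}$, giving $h_0=(\lambda_0)^{k-j}(\lambda_1)^{j}$, a genuine holomorphic section of $\mathcal{O}(k)$ over $V_0$ — whose contribution to $\chi$ is $1\neq 0$, and symmetrically for $h_1$ when $k-j\ge 0$.

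The main obstacle is the bookkeeping of which half-space ($\zeta=0$ versus $\zeta=\infty$) each factor of the denominator controls, and making sure the homogeneity degrees line up so that $h_i$ really is in $\Gamma(V_i,\mathcal{O}(k))$ (not $\mathcal{O}$ of a shifted weight); the $\lambda_A\d\lambda^A$ measure, which carries weight $2$ and homogeneity degree $2$ in $\lambda$, must be accounted for carefully so that the integrand of \eqref{ContourIntegral2} is genuinely a $1$-form of homogeneity $0$. A secondary subtlety is that ``holomorphic over $V_i$'' on twistor space is a statement about holomorphy in all of $\PT$'s coordinates, not just the fibre coordinate $\zeta$; but since $h_i$ is pulled back from $\PT$ and annihilated by the twistor distribution, its $\psi$-coefficients solve the Teukolsky equation (by prop. \ref{prop:TF1}) and in particular its fibrewise Laurent/Taylor structure over $U_i$ is exactly the one used above, so the residue argument goes through verbatim.
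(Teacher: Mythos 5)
Your proposal is correct and takes essentially the same route as the paper: expand $h_0\in\Gamma(V_0,\mathcal{O}(2w))$ and $h_1\in\Gamma(V_1,\mathcal{O}(2w))$ in Taylor series in $\zeta$, resp. $\tilde\zeta$, insert them into \eqref{ContourIntegral2}, and note that the extracted residues vanish identically precisely when $w\pm\tfrac{p}{2}\leq-1$, with explicit monomials witnessing failure otherwise. One small caveat: your preliminary claim that ``$h_i$ divided by the denominator is holomorphic over $V_i$'' is \emph{equivalent} to $h_i\in\Gamma(V_i,\mathcal{O}(2w))$ is not literally an iff (the denominator vanishes at $\zeta=0$ and $\zeta=\infty$, which lie in $V_0$ and $V_1$ respectively, so the two conditions differ when the exponents are nonzero), but this does not affect your argument since the computation you actually carry out uses the coboundary freedom of \eqref{H1}, exactly as in the paper's proof.
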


\begin{proof}
Let $h_i\in\Gamma(V_i,\mathcal{O}(2w))$, with $i=0,1$. Put $\Psi'=\Psi+h_0+h_1$, then $\Psi'$ and $\Psi$ are cohomologous. Letting $\chi'$ be the space-time field generated by inserting $\Psi'$ in the integrand in the right hand side of \eqref{ContourIntegral2}, we wish to know whether $\chi'$ and $\chi$ are equal for any choice of $h_{0},h_{1}$. Write $h_0,h_1$ in terms of their Taylor series:
\begin{align}\label{Taylor}
 h_{0}(x,\lambda) = (\lambda_0)^{2w}\sum_{j=0}^{\infty}a_{j}(x)\frac{(\lambda_{1})^{j}}{(\lambda_{0})^{j}}, 
 \qquad 
 h_{1}(x,\lambda) = (\lambda_1)^{2w}\sum_{j=0}^{\infty}b_{j}(x)\frac{(\lambda_{0})^{j}}{(\lambda_{1})^{j}}.
\end{align}
Replacing in the contour integral \eqref{ContourIntegral2} for $\Psi'$, we find
\begin{align}
\nonumber \chi' &= \chi + \frac{1}{2\pi\i}\oint_{\Gamma}\frac{h_{0}(x,\lambda)\lambda_A\d\lambda^A}{(\lambda_0)^{w+1-\frac{p}{2}} (\lambda_1)^{w+1+\frac{p}{2}}} 
+ \frac{1}{2\pi\i}\oint_{\Gamma}\frac{h_{1}(x,\lambda)\lambda_A\d\lambda^A}{(\lambda_0)^{w+1-\frac{p}{2}} (\lambda_1)^{w+1+\frac{p}{2}}} \\
&= \chi + \sum_{j=0}^{\infty}\frac{a_j}{2\pi\i}\oint_{\Gamma}\zeta^{j-w-1-\frac{p}{2}}\d\zeta 
 - \sum_{j=0}^{\infty}\frac{b_j}{2\pi\i}\oint_{\Gamma}\tilde\zeta^{j-w-1+\frac{p}{2}}\d\tilde\zeta 
 \label{chi'}
\end{align}
where in the second line we replaced \eqref{Taylor} together with $\lambda_A\d\lambda^A=(\lambda_0)^2\d\zeta=-(\lambda_1)^2\d\tilde\zeta$ (recall \eqref{zeta}). The functions $h_0,h_1$ are arbitrary, so we see that $\chi'$ and $\chi$ will be equal iff the terms with $a_j$ and $b_j$ in \eqref{chi'} vanish independently. The terms with $a_j$ vanish iff $j-w-1-\frac{p}{2}$ is a non-negative integer for all $j$, and the terms with $b_{j}$ vanish iff $j-w-1+\frac{p}{2}$ is a non-negative integer for all $j$. Since from \eqref{Taylor} we have $j\geq0$, this is equivalent to 
\begin{align*}
 w+1\pm\frac{p}{2} \in -\mathbb{N}\cup0,
\end{align*}
which can be rewritten as \eqref{Restrictionspw}.
\end{proof}

We now wish to give an interpretation of the restrictions \eqref{Restrictionspw}. Notice first that if $w\geq-\frac{1}{2}$, then there is no $p$ such that \eqref{Restrictionspw} can be satisfied. So we must have $w\leq-1$. In this case, suppose that $\chi_{(w,p)}$ is a solution to the Teukolsky equation for weights $(w,p)$, where $w\pm\frac{p}{2}\in-\mathbb{N}$. Applying the recursion operator $\mathcal{R}$ and its inverse, we generate other solutions $\chi_{(w,p_{j})}=\mathcal{R}^{j}\chi_{(w,p)}$, with $p_{j}=p+2j$, $j\in\mathbb{Z}$. A certain number of the new solutions will still satisfy the constraint $w\pm\frac{p_{j}}{2}\in-\mathbb{N}$. To find this number, notice that the conditions $w\pm\frac{p_{j}}{2}\in-\mathbb{N}$ are equivalent to $p_{j}=-2w-2n$ and $p_{j}=2w+2\tilde{n}$ for $n,\tilde{n}\in\mathbb{N}$, so the possible values of $p_{j}$ are $2w+2, 2w+4, 2w+6$, $...$, $-2w-6, -2w-4, -2w-2$. The number of different $p$'s is then $-2w-1$. So we have $-2w-1$ fields $\{\chi_{(w,p_j)}\}_{j=0}^{-2w-2}$, where each $\chi_{(w,p_j)}$ solves the Teukolsky equation for weights $(w,p_{j})$, and since $(w,p_j)$ satisfy \eqref{Restrictionspw}, each $\chi_{(w,p_j)}$ generates a cohomology class in $\check{H}^{1}(\PT,\mathcal{O}(2w))$. But since these fields are related by the recursion operator, proposition \ref{prop:ROTF} implies that they generate the {\em same} twistor function $\Psi$, and so the {\em same} cohomology class $[\Psi]\in\check{H}^{1}(\PT,\mathcal{O}(2w))$. In summary, the restrictions \eqref{Restrictionspw} mean that a single class in $\check{H}^{1}(\PT,\mathcal{O}(2w))$ is associated to $N$ solutions to the Teukolsky equations, where $N=0$ if $w>-1$ and $N=-2w-1$ if $w\leq-1$. But now we see that this is simply the known structure of the cohomology group $\check{H}^{1}(\PT,\mathcal{O}(2w))$: it is well-known (see e.g. \cite{HuggettTod, MW}) that, on restriction to a twistor line $\CP^1$, one has
\begin{align}\label{H1CP1}
 \check{H}^{1}(\CP^1,\mathcal{O}(2w)) = 
 \begin{cases} \mathbb{C}^{-2w-1} \quad &\text{if } \quad w\leq-1 \\
 0 \quad &\text{if } \quad w>-1 \end{cases}.
\end{align}

\subsection{The conformally K\"ahler case}
\label{Sec:CK}

Suppose that the geometry satisfies \eqref{SDWeyl} and is also conformal to a K\"ahler metric (which is then a scalar-flat K\"ahler surface). In our framework this means that the Lee form is locally exact, $f_a=\partial_a\log\phi$ for some $\phi$. The interesting point now is that, by using the recursion operator $\mathcal{R}$, we can map the Teukolsky equation with any weights $(w,p)$ to the conformal Laplacian, as long as $p$ is even.

\begin{proposition}\label{prop:CKcase}
Let $(\M,g_{ab})$ be conformal to a scalar-flat K\"ahler surface. Let $\chi_{(w,p)}$ be an arbitrary weighted scalar field with weights $(w,p)$, where $p$ is even. Then we have the identity
\begin{align}\label{identityCK3}
\Box^{\C}\mathcal{R}^{-\frac{p}{2}}\chi_{(w,p)} = \phi^{-(w+1)}\left(\Box + \tfrac{R}{6} \right)\left[\phi^{w+1}\mathcal{R}^{-\frac{p}{2}}\chi_{(w,p)} \right].
\end{align}
\end{proposition}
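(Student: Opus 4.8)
The plan is to reduce the claim to the already-established identity \eqref{identityCK} of Proposition \ref{prop:identitiesC}(3), which handles exactly the case of weights $(w,0)$. Since the recursion operator $\mathcal{R}$ raises the GHP weight $p$ by $2$ (Proposition \ref{prop:RO}), the operator $\mathcal{R}^{-p/2}$ applied to a field of weights $(w,p)$ produces a field $\eta := \mathcal{R}^{-p/2}\chi_{(w,p)}$ of weights $(w,0)$; this is where the hypothesis that $p$ is even is used, so that $-p/2\in\mathbb{Z}$ and the (formal) inverse powers of $\mathcal{R}$ introduced in the proof of Proposition \ref{prop:TF2} make sense. Note that \eqref{identityCK} is an \emph{operator} identity, valid for any weight-$(w,0)$ field, not just for solutions of the Teukolsky equation; so we do not need $\chi_{(w,p)}$ to satisfy $\Box^{\C}\chi_{(w,p)}=0$, consistent with the statement, which only asks $\chi_{(w,p)}$ to be an arbitrary weighted scalar.

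The steps I would carry out: first, observe that $\mathcal{R}^{-p/2}$ is well-defined on arbitrary weighted scalars (not only solutions), because the construction in Proposition \ref{prop:RO} uses only the potential existence of Remark \ref{Rem:propertiesofC}(3), which requires $\C^{A'}\tilde{\C}_{A'}\eta=0$ or $\tilde{\C}^{A'}\C_{A'}\eta=0$ — but in the conformally self-dual, conformally K\"ahler setting both $\tilde{\C}^{A'}\tilde{\C}_{A'}$ and $\C^{A'}\C_{A'}$ vanish on all weighted scalars by \eqref{DolbeaultSquared} (with $\Psi_{ABCD}\equiv 0$), so the intermediate relations needed to define each $\mathcal{R}^{\pm 1}$ step hold unconditionally. (If one prefers, one can simply \emph{define} $\mathcal{R}^{-p/2}\chi_{(w,p)}$ to be any weight-$(w,0)$ field $\eta$ with $\tilde{\C}_{A'}\mathcal{R}^{j}\eta = \C_{A'}\mathcal{R}^{j+1}\eta$ along the chain connecting it to $\chi_{(w,p)}$, and note the claimed identity is insensitive to the ambiguity in $\mathcal{R}^{-1}$ since both sides depend only on $\eta$.) Second, set $\eta:=\mathcal{R}^{-p/2}\chi_{(w,p)}$, which has weights $(w,0)$. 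Third, apply \eqref{identityCK} directly to $\eta$:
\begin{align*}
 \Box^{\C}\eta = \phi^{-(w+1)}\left(\Box + 2\Psi_2 + \tfrac{R}{6}\right)\left[\phi^{w+1}\eta\right].
\end{align*}
Fourth, invoke the conformally self-dual hypothesis \eqref{SDWeyl}, which gives $\Psi_2=0$, so the middle term drops and we obtain
\begin{align*}
 \Box^{\C}\eta = \phi^{-(w+1)}\left(\Box + \tfrac{R}{6}\right)\left[\phi^{w+1}\eta\right].
\end{align*}
Substituting back $\eta = \mathcal{R}^{-p/2}\chi_{(w,p)}$ yields exactly \eqref{identityCK3}.

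The only genuine subtlety — the "hard part" — is the bookkeeping around $\mathcal{R}^{-1}$: unlike $\mathcal{R}$ itself, the inverse step $\C_{A'}\chi = \tilde{\C}_{A'}\chi^{(-1)}$ determines $\chi^{(-1)}$ only up to the kernel of $\tilde{\C}_{A'}$, so $\mathcal{R}^{-p/2}\chi_{(w,p)}$ is a priori not unique. I would address this by noting that the statement should be read as: \emph{for any} choice of $\eta=\mathcal{R}^{-p/2}\chi_{(w,p)}$ (equivalently, for the field $\eta$ produced by the recursive construction of Proposition \ref{prop:TF2} run "downward" $p/2$ times), the identity \eqref{identityCK3} holds — and this is automatic because \eqref{identityCK} is an operator identity holding for every weight-$(w,0)$ field. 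Thus no uniqueness is actually needed: both sides of \eqref{identityCK3} are evaluated on the same $\eta$, and the equality is just \eqref{identityCK} specialized to $\Psi_2=0$. All remaining content is the verification, already done in Proposition \ref{prop:identitiesC}(3), that the Lee form of a conformally K\"ahler metric is exact, $f_a=\partial_a\log\phi$, which is what makes $\Box^{\C}$ conformally related to the conformal Laplacian of $\hat g_{ab}=\phi^2 g_{ab}$ in the first place.
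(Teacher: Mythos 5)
Your core argument is exactly the paper's: count weights so that $\mathcal{R}^{-p/2}\chi_{(w,p)}$ has weights $(w,0)$ (this is where evenness of $p$ enters), apply the operator identity \eqref{identityCK} of Proposition \ref{prop:identitiesC} to that field, and drop the $2\Psi_2$ term by conformal self-duality; on this the proposal is correct. One side remark in your discussion is wrong, however: \eqref{DolbeaultSquared} does \emph{not} make $\mathcal{R}^{\pm 1}$ well defined on arbitrary weighted scalars, because the potential lemma of Remark \ref{Rem:propertiesofC} applied to $B_{A'}=\tilde{\C}_{A'}\chi$ (resp. $A_{A'}=\C_{A'}\chi$) requires $\C^{A'}\tilde{\C}_{A'}\chi=0$ (resp. $\tilde{\C}^{A'}\C_{A'}\chi=0$), which by \eqref{CSD} is precisely the Teukolsky equation $\Box^{\C}\chi=0$, not a consequence of $\tilde{\C}^{A'}\tilde{\C}_{A'}=0=\C^{A'}\C_{A'}$; so the recursion steps are only guaranteed for Teukolsky solutions, as in Propositions \ref{prop:RO} and \ref{prop:TF2}. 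This does not damage your proof, since your fallback reading --- that \eqref{identityCK3} is simply \eqref{identityCK} with $\Psi_2=0$ evaluated on whatever weight-$(w,0)$ field $\mathcal{R}^{-p/2}\chi_{(w,p)}$ denotes, independently of how (or whether uniquely) it is constructed --- is exactly the content of the paper's proof.
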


\begin{proof}
Recall that if $\chi_{(w,p)}$ has weights $(w,p)$, then $\mathcal{R}^{m}\chi_{(w,p)}$ has weights $(w,p+2m)$, with $m\in\mathbb{Z}$. If $p$ is even, it follows that $\mathcal{R}^{-\frac{p}{2}}\chi_{(w,p)}$ has weights $(w,0)$. Then the identity \eqref{identityCK} for the Teukolsky operator $\Box^{\C}$ acting on scalar fields with weights $(w,0)$ applies to $\mathcal{R}^{-\frac{p}{2}}\chi_{(w,p)}$. But since $\Psi_2\equiv0$, we see that the operator simply reduces to the conformal Laplacian composed with conjugation by $\phi^{w+1}$.
\end{proof}

If $\chi_{(w,p)}$ is now a solution to the Teukolsky equation for weights $(w,p)$, then $\mathcal{R}^{-\frac{p}{2}}\chi_{(w,p)}$ is a solution to the corresponding equation for weights $(w,0)$, since $\mathcal{R}$ maps solutions to solutions. Then the scalar field $\phi^{w+1}\mathcal{R}^{-\frac{p}{2}}\chi_{(w,p)}$ has weights $(-1,0)$ and, in view of \eqref{identityCK3}, is a solution to the conformal wave equation. Conversely, suppose that $\Phi$ has weights $(-1,0)$ and solves the conformal wave equation $(\Box+\frac{R}{6})\Phi=0$. For $p$ even, the field $\chi_{(w,p)}:=\phi^{-(w+1)}\mathcal{R}^{\frac{p}{2}}\Phi$ has weights $(w,p)$ and, again as a consequence of \eqref{identityCK3}, solves $\Box^{\C}\chi_{(w,p)}=0$, which is the Teukolsky equation for weights $(w,p)$.

\section{Quaternionic-K\"ahler Metrics}
\label{Sec:QK}

Here we specialise to the case in which the geometry $(\M,g_{ab})$ satisfies not only \eqref{SDWeyl} but is also Einstein, with non-zero cosmological constant $\lambda$. In four dimensions, these manifolds are called quaternionic-K\"ahler (or self-dual Einstein)\footnote{Despite the name, these manifolds are generically not K\"ahler.}. The curvature satisfies
\begin{align}\label{qK}
 \Psi_{ABCD}=0, \qquad R_{ab}=\lambda g_{ab}, \qquad \lambda\neq0.
\end{align}

\subsection{Gravitational perturbations}
\label{Sec:GPqK}

We will focus on metric deformations, that is, we perturb the metric $g_{ab} \to g_{ab}+\varepsilon h_{ab}$, and we wish to study the linearised Einstein equations $\mathcal{E}_{ab}[h]=0$, where
\begin{align}\label{LEE}
\mathcal{E}_{ab}[h]:=
-\tfrac{1}{2}\Box h_{ab} - \tfrac{1}{2}\nabla_a\nabla_b{\rm tr}(h) + \nabla^{c}\nabla_{(a}h_{b)c} - \tfrac{1}{2}g_{ab}[\nabla^c\nabla^d h_{cd} - \Box{\rm tr}(h)] + \lambda h_{ab} = 0
\end{align}
and $\Box=g^{ab}\nabla_a\nabla_b$, ${\rm tr}(h)=g^{ab}h_{ab}$.

\begin{proposition}\label{prop:SEOT}
Let $(\M,g_{ab})$ be quaternionic-K\"ahler \eqref{qK}. Fix an integrable complex structure, with principal spinors $o_A,\iota_A$, Lee form $f_{a}$, and connection $\C_a$ as in section \ref{Sec:preliminaries}. Then for an arbitrary symmetric tensor field $h_{ab}$, the following identities hold:
\begin{align}\label{SEOT}
 - \mathring\Omega^{-1} W^{abcd}_{\bf i}(\nabla_a-4f_a)\nabla_d\mathcal{E}_{bc}[h] 
  = \tfrac{1}{4}\Box^{\C}\left(\mathring\Omega^{-1}W^{abcd}_{\bf i}\dot{C}_{abcd}[h]\right), 
  \qquad {\bf i}=0,...,4
\end{align}
where $\mathring{\Omega}$ is an auxiliary constant conformal factor, $\mathcal{E}_{ab}[h]$ is the linearised Einstein operator \eqref{LEE}, $\dot{C}_{abcd}[h]$ is the linearised Weyl tensor thought of as a linear functional of $h_{ab}$, and 
\begin{align}\label{Wtensors}
W_{\bf i}^{abcd}:=S_{\bf i}^{ABCD}\epsilon^{A'B'}\epsilon^{C'D'}, 
\end{align}
with $S_{0}^{ABCD}=o^{(A}o^Bo^Co^{D)}$, $S_{1}^{ABCD}=o^{(A}o^Bo^C\iota^{D)}$, ..., $S_{4}^{ABCD}=\iota^{(A}\iota^B\iota^C\iota^{D)}$.
\end{proposition}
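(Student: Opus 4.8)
The plan is to verify the identity \eqref{SEOT} by a direct computation that reduces the left-hand side to an expression of the form $\Box^{\C}$ acting on a weighted scalar, using the structure of the connection $\C_a$ and the fourth item of proposition \ref{prop:identitiesC}. First I would observe that for each fixed $\mathbf{i}$, the scalar $\chi_{\mathbf{i}} := \mathring\Omega^{-1}W^{abcd}_{\bf i}\dot C_{abcd}[h]$ is a weighted scalar field: since $\dot C_{abcd}[h]$ is (conformally and GHP) a genuine tensor and $W^{abcd}_{\bf i}$ is built out of $\mathbf{i}$ copies of $\iota^A$ and $4-\mathbf{i}$ copies of $o^A$ (symmetrised) together with $\epsilon$'s, the weights of $\chi_{\mathbf i}$ are $w = -1$ (the conformal weight needed to make $\Box^{\C}$ well-defined on it, absorbing $\mathring\Omega$) and $p = 2(\mathbf i - 2)$, i.e. $p$ ranges over $-4,-2,0,2,4$. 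So $\Box^{\C}\chi_{\mathbf i}$ makes sense and the right-hand side is a Teukolsky-type operator applied to the extreme components of the linearised Weyl tensor.

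The core of the argument is an operator factorisation. I would start from the standard fact (a linearised Bianchi / commutator identity on an Einstein background) that the second derivative of the linearised Einstein operator, suitably contracted, equals a second-order operator applied to the linearised Weyl spinor $\dot\Psi_{ABCD}[h]$; schematically $\nabla^{E}{}_{(A'}\nabla^{|F|}{}_{B')}\dot\Psi_{EFCD} \sim (\text{wave} + \text{curvature})\dot\Psi_{ABCD} + (\text{terms} \propto \mathcal E[h])$. Contracting this spinorial identity with $S^{ABCD}_{\mathbf i}$ and using the shear-free conditions \eqref{SFR} (equivalently $\sigma^0_{A'} = \sigma^1_{A'} = 0$ from remark \ref{Rem:propertiesofC}) together with the dyad derivative formulas \eqref{Cdyad}, the primed-spinor derivatives can be reorganised into $\C_a$-derivatives acting on the scalar $W^{abcd}_{\bf i}\dot C_{abcd}[h]$ rather than on the tensor itself: every time a $\nabla$ hits one of the $o^A$'s or $\iota^A$'s in $W_{\mathbf i}$, the correction terms in \eqref{Cspinor} are precisely what convert $\nabla$ into $\C$, and the $f_a$ and $P_a$ connection pieces account for the explicit $(\nabla_a - 4f_a)$ and the $\mathring\Omega^{-1}$ weight factor. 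The curvature terms produced in the commutator should, on a quaternionic-K\"ahler background where $\Psi_{ABCD}=0$ and $R_{ab}=\lambda g_{ab}$, either cancel or be absorbed into the $\C_a$-connection (this is exactly the content of \eqref{CSD}: on a conformally self-dual space $\C^{A'}\tilde\C_{A'} = \tfrac12\Box^{\C}$ with the curvature built in). Finally I would invoke the fourth item of proposition \ref{prop:identitiesC} to recognise the resulting combination $\C^{A'}\tilde\C_{A'}$ (or $\tilde\C^{A'}\C_{A'}$) acting on $\chi_{\mathbf i}$ as $\tfrac12\Box^{\C}\chi_{\mathbf i}$, which together with the overall numerical factors gives the stated $\tfrac14$.

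The main obstacle I anticipate is bookkeeping: tracking all the connection correction terms so that the left-hand side assembles \emph{exactly} into $\Box^{\C}$ of a scalar with no leftover terms. Concretely, the delicate points are (i) showing that the first-order operator on the left is $(\nabla_a - 4f_a)\nabla_d$ and not something with a different Lee-form coefficient — this pins down the conformal weight $w=-1$ and the role of $\mathring\Omega$, and must be matched against the weights of $\dot C_{abcd}[h]$; (ii) verifying that all terms proportional to $\mathcal{E}_{bc}[h]$ on the left genuinely arise from the linearised Bianchi identity, i.e. that the identity is an \emph{off-shell} operator identity (valid for arbitrary $h_{ab}$, not just solutions), which is what the proposition claims; and (iii) checking that the Einstein condition $R_{ab} = \lambda g_{ab}$ is used in exactly the right places to kill the unwanted Ricci-curvature commutator terms, consistent with \eqref{CSD} holding on this background. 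Since the analogous identity is effectively the adjoint of the Teukolsky–Starobinsky / spin-lowering relation familiar from the black hole literature, I would organise the computation by first establishing it purely spinorially in abstract index form, and only at the end translating into the $\C_a$-language via \eqref{Cdyad} and proposition \ref{prop:identitiesC}.
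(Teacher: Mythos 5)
Your overall route---use the linearised (contracted) Bianchi identity to trade the divergence of $\dot\Psi_{ABCD}[h]$ for derivatives of the linearised Einstein tensor, contract with $S^{ABCD}_{\bf i}$, convert $\nabla$ into $\C$ using integrability plus self-duality (so that $\C_a o_B=\C_a\iota_B=0$ and $S_{\bf i}$ passes through the derivatives), and finally invoke \eqref{CSD} to recognise $\tfrac12\Box^{\C}$---is essentially the paper's. The paper organises it slightly differently: it first proves the \emph{exact, nonlinear} identity
$S_{\bf i}^{ABCD}\C_{AA'}\C^{A'E}\varphi_{BCDE}=-\tfrac12\mathring\Omega^{-1}W^{abcd}_{\bf i}(\nabla_a-4f_a)\nabla_d\bigl(G_{bc}+\tfrac{R}{4}g_{bc}\bigr)$, with $\varphi_{ABCD}=\mathring\Omega^{-1}\Psi_{ABCD}$, on an arbitrary almost-Hermitian manifold using only the contracted Bianchi identity and \eqref{Cspinor}, and then linearises about the background \eqref{qK}. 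Since both $\Psi_{ABCD}$ and $G_{ab}+\tfrac{R}{4}g_{ab}$ vanish on that background, no terms from perturbing the frame, $f_a$ or $W_{\bf i}$ survive, and the result is automatically an off-shell identity in $h_{ab}$---which disposes of your worry (ii) without extra work.

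There is, however, a concrete error that would derail exactly the bookkeeping you identify as the crux: your weight assignment for $\chi_{\bf i}=\mathring\Omega^{-1}W^{abcd}_{\bf i}\dot C_{abcd}[h]$. You assign $(w,p)=(-1,\,2{\bf i}-4)$; the correct weights are $(w,p)=(-3,\,4-2{\bf i})$, as stated after the proposition (cf.\ \eqref{TeukFields}). Indeed $W^{abcd}_{\bf i}\dot C_{abcd}\propto S^{ABCD}_{\bf i}\dot\Psi_{ABCD}$, where $S_{\bf i}^{ABCD}$ carries $w=-2$, $p=4-2{\bf i}$ and $\mathring\Omega^{-1}\dot\Psi_{ABCD}$ carries $w=-1$, $p=0$. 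The $w=-1$ you quote is the weight of the spinor $\dot\varphi_{ABCD}=\mathring\Omega^{-1}\dot\Psi_{ABCD}$ \emph{before} contraction, and $p=2{\bf i}-4$ is the weight of the adjoint field $\Phi^{\bf i}$ of proposition \ref{prop:metricpert}, not of $\chi_{\bf i}$ (note also the sign flip: $\chi_0\propto o^Ao^Bo^Co^D\dot\Psi_{ABCD}$ has $p=+4$). This matters because $\Box^{\C}$ is a weight-dependent operator, so with your weights the right-hand side of \eqref{SEOT} would be a different operator and the identity would not close; it also matters for the adjoint argument in proposition \ref{prop:metricpert}, whose weights are derived from these. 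Relatedly, the coefficient in $(\nabla_a-4f_a)$ is not ``pinned down by the scalar having $w=-1$'': it arises from writing out the $\C$-corrections of \eqref{Cspinor} on the four-index, weight-$(-1,0)$ field $\dot\varphi_{BCDE}$---this is the only role of $\mathring\Omega$, which converts the conformally invariant $\dot\Psi_{ABCD}$ into a $w=-1$ field and is set to $1$ at the end. With the weights corrected, and your schematic ``wave\,+\,curvature'' relation replaced by the linearised contracted Bianchi identity $\nabla^{A'E}\dot\Psi_{BCDE}=\nabla_B{}^{B'}\dot\Phi_{CDB'}{}^{A'}$ followed by one further derivative, your plan does reduce to the paper's computation.
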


\begin{proof}
Suppose first that $(\M,g_{ab},J^{a}{}_{b})$ is an arbitrary almost-Hermitian manifold, with ASD Weyl spinor $\Psi_{ABCD}$. Recalling that $\Psi_{ABCD}$ has conformal weight zero, let $\mathring{\Omega}$ be an (auxiliary) constant conformal factor, $\nabla_{a}\mathring{\Omega}=0$, and define 
\begin{align}\label{spin2field}
 \varphi_{ABCD}:=\mathring{\Omega}^{-1}\Psi_{ABCD},
\end{align}
which has conformal weight $w=-1$ (and $p=0$). Using \eqref{Cspinor}, we have: 
\begin{align*}
 S_{\bf i}^{ABCD}\C_{AA'}\C^{A'E}\varphi_{BCDE} 
 &= \mathring{\Omega}^{-1}S_{\bf i}^{ABCD}(\nabla_{AA'}-4f_{AA'})\nabla^{A'E}\Psi_{BCDE}\\
 &= \mathring{\Omega}^{-1}S_{\bf i}^{ABCD} (\nabla_{AA'}-4f_{AA'}) \nabla_{B}^{B'} 
 \Phi_{CDB'}{}^{A'} \\
 &=\mathring{\Omega}^{-1}S_{\bf i}^{ACBD}\epsilon^{A'C'}\epsilon^{B'D'}(\nabla_{AA'}-4f_{AA'})\nabla_{DD'}\Phi_{BCB'C'} \\
 &=  - \tfrac{1}{2}\mathring\Omega^{-1} W^{abcd}_{\bf i}(\nabla_a-4f_a)\nabla_{d}(G_{bc}+\tfrac{R}{4}g_{bc}),
\end{align*}
where in the second line we used Bianchi identities \cite[Eq. (4.10.7)]{PR1}, and in the last line we used the definition \eqref{Wtensors} and the relation between the Ricci spinor $\Phi_{ABA'B'}$ and the Einstein tensor $G_{ab}$ \cite[Eq. (4.6.25)]{PR1}. We now linearise the above equation around a space that satisfies \eqref{qK}. The ASD linearised Weyl tensor has a spinor decomposition $\dot{C}^{-}_{abcd}\equiv\dot\Psi_{ABCD}\epsilon_{A'B'}\epsilon_{C'D'}$, and correspondingly we have a field $\dot{\varphi}_{ABCD}$ as in \eqref{spin2field}. The fields $S_{\bf i}^{ABCD}$ and the operator $\C_{AA'}$ can now be taken to be the ones of the background, so we have
\begin{align*}
 S_{\bf i}^{ABCD}\C_{AA'}\C^{A'E}\dot{\varphi}_{BCDE} 
 = \tfrac{1}{2}\Box^{\C}(S_{\bf i}^{ABCD}\dot{\varphi}_{ABCD}),
\end{align*}
thus \eqref{SEOT} follows.
\end{proof}

We see that the role of the auxiliary constant conformal factor $\mathring{\Omega}$ is simply to allow us to apply the operator $\C_{AA'}$ to fields with the appropriate conformal weights; once the final formulae have been obtained, $\mathring{\Omega}$ can simply be set to one.

The point of the identity \eqref{SEOT} is the following. Suppose first that $h_{ab}$ is a solution to the linearised Einstein equations, $\mathcal{E}_{ab}[h]=0$. Define the weighted scalar fields
\begin{align}\label{TeukFields}
 \chi_{\bf i}:=\tfrac{1}{4}\mathring\Omega^{-1}W^{abcd}_{\bf i}\dot{C}_{abcd}[h],
\end{align}
whose weights are $w=-3$, $p=4-2{\bf i}$. Then \eqref{SEOT} tells us that each $\chi_{\bf i}$, ${\bf i}=0,...,4$, is a solution to the Teukolsky equation for weights $(-3,4-2{\bf i})$. From proposition \ref{prop:cohomology}, since the weights satisfy \eqref{Restrictionspw}, each $\chi_{\bf i}$ then generates a cohomology class in $\check{H}^{1}(\PT,\mathcal{O}(-6))$. In addition, the linearised Bianchi identity $\nabla^{AA'}\dot\Psi_{ABCD}=0$ implies $\C^{AA'}\dot{\varphi}_{ABCD}=0$, which gives $\tilde\C_{A'}(\iota^A\dot{\varphi}_{ABCD})=\C_{A'}(o^A\dot{\varphi}_{ABCD})$. Contracting with $o^A$'s and $\iota^A$'s, we see that the $\chi_{\bf i}$'s satisfy the recursion relations, therefore, from prop. \ref{prop:ROTF}, they generate the same class. In summary, a solution $h_{ab}$ to the linearised Einstein equations generates a cohomology class in $\check{H}^{1}(\PT,\mathcal{O}(-6))$ via the perturbed curvature.

In the opposite direction, from prop. \ref{prop:TF1} we can use a representative in $\check{H}^{1}(\PT,\mathcal{O}(-6))$ to generate solutions to the Teukolsky equations; however, how do we reconstruct a perturbed metric from them? The identity \eqref{SEOT} gives a way to do this, but it also suggests that it is actually the group $\check{H}^{1}(\PT,\mathcal{O}(+2))$ which is naturally involved here:
\begin{proposition}\label{prop:metricpert}
Let $\Phi^{\bf i}$ be a solution to the Teukolsky equation with weights $w=1$, $p=2{\bf i}-4$, for any ${\bf i}$ in $\{0,...,4\}$. Then the tensor field
\begin{align}\label{metricpert}
 h_{ab}^{\bf i} := -\nabla^{d}[(\nabla^{c}+4f^{c})W^{\bf i}_{c(ab)d}\Phi^{\bf i}]
\end{align}
(no sum over $\bf i$) is a solution to the linearised Einstein equations, $\mathcal{E}_{ab}[h^{\bf i}]=0$. 
\end{proposition}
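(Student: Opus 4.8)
The plan is to show that the tensor $h^{\bf i}_{ab}$ defined in \eqref{metricpert} is, up to an overall constant, an adjoint/transpose of the operator appearing on the left-hand side of \eqref{SEOT}, so that the vanishing of $\mathcal{E}_{ab}[h^{\bf i}]$ becomes a consequence of the Teukolsky equation satisfied by $\Phi^{\bf i}$. Concretely, \eqref{SEOT} exhibits an operator identity of the schematic form $\mathcal{O}^{\bf i}\,\mathcal{E}[h] = \tfrac14\Box^{\C}\mathcal{S}^{\bf i}[h]$, where $\mathcal{O}^{\bf i} = -\mathring\Omega^{-1}W^{abcd}_{\bf i}(\nabla_a-4f_a)\nabla_d(\cdot)_{bc}$ maps symmetric tensors to weighted scalars of weights $(-3,4-2{\bf i})$, and $\mathcal{S}^{\bf i}[h] = \mathring\Omega^{-1}W^{abcd}_{\bf i}\dot C_{abcd}[h]$. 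The key structural input is that $\mathcal{E}_{ab}$ is formally self-adjoint (it is the linearised Einstein operator), so taking the formal adjoint of \eqref{SEOT} yields an identity $\mathcal{E}_{ab}\,(\mathcal{O}^{\bf i})^{\dagger}\Phi = \tfrac14(\mathcal{S}^{\bf i})^{\dagger}(\Box^{\C})^{\dagger}\Phi$ for any weighted scalar $\Phi$ of the dual weights, which are $w=1$, $p=2{\bf i}-4$ — exactly the weights assumed for $\Phi^{\bf i}$. Thus the first step is to identify $(\mathcal{O}^{\bf i})^{\dagger}$ with (a constant multiple of) the operator $\Phi\mapsto -\nabla^{d}[(\nabla^{c}+4f^{c})W^{\bf i}_{c(ab)d}\Phi]$, i.e. to check that moving $\nabla_a-4f_a$ and $\nabla_d$ across the pairing, together with the symmetrisation built into $W^{\bf i}_{c(ab)d}$, reproduces \eqref{metricpert}. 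This is where the sign in front and the $+4f^c$ (versus $-4f_a$) come from: integration by parts against the weighted volume element converts $(\nabla_a - 4f_a)$ acting on a weight-$(-3,\cdot)$ scalar into $-(\nabla^a + 4f^a)$ acting on the weight-$(1,\cdot)$ test function, using that the Lee form $f_a$ is (locally) the logarithmic derivative of the conformal density.

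Second, I would handle the right-hand side. We need $(\mathcal{S}^{\bf i})^{\dagger}(\Box^{\C})^{\dagger}\Phi^{\bf i} = 0$. Since $\Box^{\C}$ is (formally) self-adjoint on weighted scalars of conformal weight balanced to make the pairing a density — and because $\Phi^{\bf i}$ solves the Teukolsky equation $\Box^{\C}\Phi^{\bf i}=0$ by hypothesis (Definition \ref{Def:TeukEq}) — the right-hand side vanishes identically. Here one must check the self-adjointness statement for $\Box^{\C} = g^{ab}\C_a\C_b$ with respect to the natural pairing between $E_{(w,p)}$ and $E_{(-w-2,-p)}$ (so that the product is a weight-$(-2,0)$ scalar, i.e. a genuine density); this follows from $\C_a$ being a connection compatible with the pairing, which is immediate from the definitions \eqref{DefC}, and from $\C_a g^{bc}=0$. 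Alternatively, one avoids adjoints altogether and argues directly: compute $\mathcal{E}_{ab}[h^{\bf i}]$, insert \eqref{metricpert}, and recognise the resulting expression — after commuting derivatives through $\mathcal{E}_{ab}$ — as a second-order operator applied to $\Phi^{\bf i}$ that coincides with $\Box^{\C}\Phi^{\bf i}$ up to curvature terms that vanish on a quaternionic-K\"ahler background by \eqref{qK}. The operator-adjoint route is cleaner and is the one I would present.

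Third, a few compatibility checks: that the weights quoted in the statement are the correct ``dual'' weights to those of $\chi_{\bf i}$ in \eqref{TeukFields} (indeed $w=-3 \leftrightarrow w=1$ via $w\mapsto -w-2$, and $p\mapsto -p$ sends $4-2{\bf i}\mapsto 2{\bf i}-4$), and that $h^{\bf i}_{ab}$ is genuinely a symmetric tensor field of conformal weight zero (the $W^{\bf i}_{c(ab)d}\Phi^{\bf i}$ combination has weight $(-1,0)$ in each of the relevant index slots, and the two derivatives restore weight zero), so that $\mathcal{E}_{ab}[h^{\bf i}]$ makes sense and \eqref{LEE} applies with the background cosmological constant $\lambda$.

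The main obstacle I anticipate is the first step: pinning down the formal adjoint of $\mathcal{O}^{\bf i}$ precisely, including all signs, the placement of the symmetrisation brackets, and the shift $-4f_a \leftrightarrow +4f^c$. In particular one must be careful that $\mathcal{O}^{\bf i}$ is built from the \emph{Levi-Civita} $\nabla_a$ (twisted only by the explicit Lee-form term $-4f_a$), not from $\C_a$, so the integration-by-parts bookkeeping mixes ordinary integration by parts for $\nabla_a$ with the density weights carried by the fields; getting the net Lee-form contribution to come out as $+4f^c$ rather than, say, $0$ or $-4f^c$ is the delicate point. A secondary subtlety is that \eqref{SEOT} is an identity for \emph{arbitrary} $h_{ab}$, so its adjoint is an identity for arbitrary $\Phi$; one then specialises $\Phi=\Phi^{\bf i}$ to kill the right-hand side. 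No compactness or boundary hypotheses are needed because both sides of the adjoint identity are pointwise (differential-operator) identities — the ``integration by parts'' is purely formal and local, valid because $f_a$ is locally exact on a conformally self-dual space.
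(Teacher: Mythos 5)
Your proposal is correct and follows essentially the same route as the paper: the paper's proof is exactly Wald's adjoint-operator method applied to the identity \eqref{SEOT}, using self-adjointness of $\mathcal{E}_{ab}$, identifying the adjoint of the scalar-extraction operator with \eqref{metricpert}, and showing (via the requirement $\C_a v^a=\nabla_a v^a$ on the boundary term) that the adjoint of $\Box^{\C}$ on weights $(-3,4-2{\bf i})$ is $\Box^{\C}$ on the dual weights $(1,2{\bf i}-4)$, so the right-hand side is killed by the Teukolsky equation for $\Phi^{\bf i}$. Your weight bookkeeping matches the paper's (the only slight imprecision is calling the weight-$(-2,0)$ product a density, whereas the relevant integrand $\Phi^{\bf i}\Box^{\C}\chi_{\bf i}$ has weight $-4$), and the sign/Lee-form details you flag are precisely the ``straightforward calculation'' the paper also leaves implicit.
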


\begin{proof}
We use Wald's adjoint operators method \cite{Wald2} and write \eqref{SEOT} as the operator identity $\mathcal{S}_{\bf i}\mathcal{E}[h_{ab}]=\mathcal{O}\mathcal{T}_{\bf i}[h_{ab}]$ valid for any symmetric $h_{ab}$, where: $\mathcal{E}$ is defined in \eqref{LEE}, $\mathcal{O}(\chi):=\Box^{\C}\chi$, $\mathcal{T}_{\bf i}[h_{ab}]:=\chi_{\bf i}$ (with $\chi_{\bf i}$ as in \eqref{TeukFields}), and $\mathcal{S}_{\bf i}$ is the operator that is acting on $\mathcal{E}_{bc}[h]$ in the left hand side of \eqref{SEOT}. The adjoint identity is $\mathcal{E}\mathcal{S}^{\dagger}_{\bf i}[\Phi]=\mathcal{T}^{\dagger}_{\bf i}\mathcal{O}^{\dagger}[\Phi]$, valid for any scalar field $\Phi$ with the appropriate weights, where we used that $\mathcal{E}$ is self-adjoint \cite{Wald2}. A straightforward calculation shows that $\mathcal{S}^{\dagger}_{{\bf i} \, ab}\equiv h^{\bf i}_{ab}$, with $h^{\bf i}_{ab}$ given by \eqref{metricpert}. It follows that if $\mathcal{O}^{\dagger}[\Phi^{\bf i}]=0$, then \eqref{metricpert} is a solution to the linearised Einstein equations. So it remains to understand the operator $\mathcal{O}^{\dagger}$. To this end, we recall that the adjoint of $\mathcal{O}$ is the unique operator $\mathcal{O}^{\dagger}$ defined by 
\begin{align*}
 \Phi^{\bf i}\mathcal{O}[\chi_{\bf i}] = \chi_{\bf i}\mathcal{O}^{\dagger}[\Phi^{\bf i}] + \nabla_{a}v^{a}_{\bf i}
 \qquad \text{(no sum over $\bf i$)}
\end{align*}
for some vector field $v^a_{\bf i}$, cf. \cite[Eq. (9)]{Wald2}. Since $\chi_{\bf i}$ has weights $(-3,4-2{\bf i})$, we have that $\mathcal{O}[\chi_{\bf i}]=g^{ab}\C_a\C_b\chi_{\bf i}$ has weights $(-5,4-2{\bf i})$. Multiplying by $\Phi^{\bf i}$, we get $\Phi^{\bf i}\mathcal{O}[\chi_{\bf i}]=\C_av^a_{\bf i}+(\Box^{\C}\Phi^{\bf i})\chi_{\bf i}$, where $v^a_{\bf i}=\Phi^{\bf i}\C^a\chi_{\bf i} - \chi_{\bf i}\C^a\Phi^{\bf i}$. Using \eqref{Cvector}, it is straightforward to show that in order to have $\C_av^a_{\bf i}=\nabla_av^a_{\bf i}$, the weights of $\Phi^{\bf i}$ must be $w=+1$, $p=2{\bf i}-4$. So we see that for $\mathcal{O}=\Box^{\C}$ as defined by \eqref{SEOT} (that is, acting on scalars with weights $(-3,4-2{\bf i})$), the adjoint is $\mathcal{O}^{\dagger}=\Box^{\C}$ but acting on scalars with weights $(1,2{\bf i}-4)$.
\end{proof}

Since in the above proposition $\Phi^{\bf i}$ is a solution to the Teukolsky equation for weights $(1,2{\bf i}-4)$, applying prop. \ref{prop:TF2} we see that $\Phi^{\bf i}$ is generated by a representative, say $\Psi^{\bf i}$, of a cohomology class in $\check{H}^{1}(\PT,\mathcal{O}(2))$: 
\begin{align}\label{qKd}
\Phi^{\bf i}(x) = \frac{1}{2\pi\i}\oint_{\Gamma}\frac{\Psi^{\bf i}(x,\lambda)\lambda_A\d\lambda^A}{(o_B\lambda^B)^{4-{\bf i}}(\iota_B\lambda^B)^{\bf i}}.
\end{align}
Now, notice that, on restriction to any twistor line $\CP^1$, the group $\check{H}^{1}(\CP^1,\mathcal{O}(2))$ is trivial (recall \eqref{H1CP1}). 
This turns out to be manifested in the second item of the following proposition:

\begin{proposition}\label{prop:qkdeformations}
Let $(\M,g_{ab})$ be quaternionic-K\"ahler \eqref{qK}.
\begin{enumerate}[noitemsep, nolistsep]
\item Let $\Pi_{A'ABC}=\Pi_{A'(ABC)}$ be a spinor field satisfying 
\begin{align}
 \nabla^{A'}_{(A}\Pi_{BCD)A'} = 0, \label{EqForSpinorPotential}
\end{align}
and let $\gamma_{CDA'B'}:=\nabla_{(A'}^{A}\Pi_{B')ACD}$. Then the following identity holds:
\begin{align}
 \nabla_{(A}^{A'}\nabla_{B}^{B'}\gamma_{CD)A'B'} = 0. \label{zeroASDWeyl}
\end{align}
\item The linearised ASD Weyl spinor of the metric perturbation \eqref{metricpert} identically vanishes.
\end{enumerate}
\end{proposition}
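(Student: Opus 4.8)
\emph{Part 1.} The plan is to treat \eqref{EqForSpinorPotential} as a second-order spinor-potential setup analogous to the classical construction of solutions to the zero-rest-mass equations from Hertz-type potentials. Starting from $\Pi_{A'ABC}$ symmetric in $ABC$ and satisfying $\nabla^{A'}_{(A}\Pi_{BCD)A'}=0$, I would form $\gamma_{CDA'B'}:=\nabla^{A}_{(A'}\Pi_{B')ACD}$ and compute $\nabla^{A'}_{(A}\nabla^{B'}_{B}\gamma_{CD)A'B'}$ directly by expanding the derivatives. The key is that the symmetrisation over $ABCD$ together with the antisymmetrisation implicit in the primed-index contractions converts most second-derivative terms into curvature terms via the spinor Ricci identities. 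One uses that on a quaternionic-K\"ahler background the Weyl spinor $\Psi_{ABCD}$ vanishes and the Ricci spinor is $\Phi_{ABA'B'}=0$ with scalar curvature $R=4\lambda$ constant, so the only surviving curvature operator acting on symmetric unprimed spinors is proportional to $\lambda$ times $\epsilon$-contractions, which drop out under the total symmetrisation on $ABCD$. The remaining first-derivative terms should be arranged to vanish precisely because of \eqref{EqForSpinorPotential}. I expect this to be a bookkeeping computation of the type in \cite[Ch.~5]{PR1}: the main care is in tracking which commutators produce $\Lambda$- versus $\Psi$-type curvature and verifying the former cancel after symmetrisation.

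\emph{Part 2.} For the second item I would connect $\Pi_{A'ABC}$ to the metric perturbation \eqref{metricpert}. The tensor $h^{\bf i}_{ab}=-\nabla^{d}[(\nabla^{c}+4f^{c})W^{\bf i}_{c(ab)d}\Phi^{\bf i}]$ is, by construction in prop.~\ref{prop:metricpert}, the image under $\mathcal{S}^{\dagger}_{\bf i}$ of $\Phi^{\bf i}$, and $\Phi^{\bf i}$ is in turn built from a twistor representative via \eqref{qKd}; tracing the operators $\mathcal{S}^{\dagger}_{\bf i}$ and the contour integral shows that $h^{\bf i}_{ab}$ is itself of the form $\nabla^{A'}_{(A}\nabla^{B'}_{B}(\cdot)_{CD)A'B'}$ for a $\gamma$ of the type appearing in Part 1, with the $\Pi$-potential assembled from $\Phi^{\bf i}$ and the spinor dyad $o_A,\iota_A$. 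Concretely, the linearised ASD Weyl spinor $\dot\Psi_{ABCD}$ is obtained from $h^{\bf i}_{ab}$ by the standard linearised-curvature operator, and one should identify this with the left-hand side of \eqref{zeroASDWeyl}; the vanishing then follows immediately from Part 1. The alternative, cleaner route is: since $\dot\Psi_{ABCD}$ satisfies the linearised Bianchi identity $\nabla^{AA'}\dot\Psi_{ABCD}=0$, and since from the discussion after prop.~\ref{prop:SEOT} the components $\chi_{\bf i}$ of $\dot\Psi$ are reconstructed from $\Phi^{\bf i}$ through the adjoint relation $\mathcal{E}\mathcal{S}^{\dagger}_{\bf i}=\mathcal{T}^{\dagger}_{\bf i}\mathcal{O}^{\dagger}$, one computes $\dot\Psi_{ABCD}$ applied to \eqref{metricpert} and finds it is a total symmetrised double curl of something annihilated by \eqref{EqForSpinorPotential}.

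\emph{Main obstacle.} The real work is in Part 2: showing that the specific tensor \eqref{metricpert}, which is phrased in terms of the GHP-weighted connection data $f^c$ and the $W^{\bf i}$ tensors, really does reduce to the spinor-potential form $\gamma_{CDA'B'}=\nabla^{A}_{(A'}\Pi_{B')ACD}$ with $\Pi$ obeying \eqref{EqForSpinorPotential}. This requires carefully converting the $(\nabla^c+4f^c)$ operator and the $o_A,\iota_A$ contractions hidden in $W^{\bf i}_{c(ab)d}$ back into plain Levi-Civita spinor derivatives — essentially undoing the weighted-connection bookkeeping — and checking that the algebraically-special structure of $\Phi^{\bf i}$ (it is the $\bf i$-th dyad component of a single spin-weight object) supplies exactly the symmetry \eqref{EqForSpinorPotential} needed. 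Once the identification is made, Part 1 does the rest.
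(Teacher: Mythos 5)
Your overall architecture does match the paper's: item 1 by direct use of the spinor commutators and the curvature restrictions \eqref{qK}, and item 2 by writing \eqref{metricpert} in the potential form $h^{\bf i}_{CDA'B'}=\nabla_{(A'}^{A}\Pi^{\bf i}_{B')ACD}$ with $\Pi^{\bf i}$ built from $\Phi^{\bf i}$ and the dyad, then invoking item 1 together with the fact that the left-hand side of \eqref{zeroASDWeyl} is exactly the linearised ASD Weyl spinor of a perturbation (\cite{PR1}, Eq.\ (5.7.15)). However, there is a genuine gap at precisely the point you label the ``main obstacle'', and the mechanism you propose for closing it is not the right one. The condition \eqref{EqForSpinorPotential} is a differential equation on $\Pi^{\bf i}$, not a symmetry, and it is not ``supplied by the algebraically-special structure of $\Phi^{\bf i}$'': it holds only because $\Phi^{\bf i}$ solves its Teukolsky equation. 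Concretely, with $\Pi^{\bf i}_{B'ACD}=(\nabla_{B'}^{B}+4f_{B'}^{B})S^{\bf i}_{ABCD}\Phi^{\bf i}$ one notes that $S^{\bf i}_{ABCD}\Phi^{\bf i}$ has weights $(3,0)$ and $\Pi^{\bf i}_{B'ACD}$ has weights $(2,0)$, so both the $(\nabla+4f)$ operator and the outer symmetrised curl can be rewritten in terms of the weighted connection $\C_{AA'}$; since $J^{a}{}_{b}$ is integrable, $\C_{AA'}$ annihilates $o_A,\iota_A$ (eq.\ \eqref{Cdyad}), so $S^{\bf i}_{ABCD}$ passes through, and the identities \eqref{CSD} give $\nabla_{(A}^{A'}\Pi^{\bf i}_{BCD)A'}=-\tfrac{1}{2}S^{\bf i}_{ABCD}\Box^{\C}\Phi^{\bf i}$. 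This vanishes precisely because $\Box^{\C}\Phi^{\bf i}=0$; for a generic weighted scalar $\Phi^{\bf i}$ the claim is false, so any proof that does not use the Teukolsky equation at this step cannot succeed. Note also that the efficient move is the opposite of ``undoing the weighted-connection bookkeeping'': one converts the plain derivatives \emph{into} $\C$-derivatives, where the dyad is covariantly constant and \eqref{CSD} is available.

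Two smaller points. Your ``alternative, cleaner route'' via the adjoint identity and the linearised Bianchi identity is not a proof of item 2: that discussion runs in the opposite direction (extracting the curvature scalars $\chi_{\bf i}$ from a given Einstein perturbation and relating them to cohomology), whereas here one must compute $\dot\Psi_{ABCD}$ of the specific tensor \eqref{metricpert}. And your statement that $h^{\bf i}_{ab}$ ``is itself of the form $\nabla^{A'}_{(A}\nabla^{B'}_{B}(\cdot)_{CD)A'B'}$'' conflates $h^{\bf i}$ (one curl of $\Pi^{\bf i}$) with its linearised ASD Weyl spinor (two further curls of $h^{\bf i}$); the identification needed is $h^{\bf i}_{CDA'B'}=\gamma_{CDA'B'}$, after which item 1 applies.
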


\begin{proof}
For the first item, the calculation is tedious but straightforward: one has to use repeatedly the spinor form of the commutator $[\nabla_{AA'},\nabla_{BB'}]$ and related identities as in \cite[Section 4.9]{PR1}, together with the assumptions \eqref{qK} and \eqref{EqForSpinorPotential}. 

For the second item, we use \eqref{Wtensors} to write \eqref{metricpert} in spinor form:
\begin{align*}
 h^{\bf i}_{CDA'B'} &= \nabla_{(A'}^{A}\Pi^{\bf i}_{B')ACD}, 
 \qquad \text{where} \qquad 
 \Pi^{\bf i}_{B'ACD}:=(\nabla_{B'}^{B}+4f_{B'}^{B})S^{\bf i}_{ABCD}\Phi^{(i)}.
\end{align*}
Using that the weights of $S^{\bf i}_{ABCD}\Phi^{\bf i}$ are $w=3$, $p=0$, a short calculation gives $\Pi^{\bf i}_{B'ACD}=\C_{B'}^{B}[S^{\bf i}_{ABCD}\Phi^{\bf i}]$. We then get
\begin{align*}
\nabla_{(A}^{A'}\Pi^{\bf i}_{BCD)A'}=\C_{(A}^{A'}\Pi^{\bf i}_{BCD)A'}=S^{\bf i}_{(ABC}{}^{E}\C_{D)A'}\C^{A'}_{E}\Phi^{\bf i} = -\tfrac{1}{2}S^{\bf i}_{ABCD}\Box^{\C}\Phi^{\bf i},
\end{align*}
where in the first equality we used that $\Pi^{\bf i}_{B'ACD}$ has weights $w=2$, $p=0$, and in the last equality we used \eqref{CSD}. Since $\Phi^{\bf i}$ is a solution to $\Box^{\C}\Phi^{\bf i}=0$, then $\Pi^{\bf i}_{B'BCD}$ satisfies \eqref{EqForSpinorPotential}. Setting $\gamma_{CDA'B'}\equiv h^{\bf i}_{CDA'B'}$, we see that eq. \eqref{zeroASDWeyl} is satisfied for $h^{\bf i}_{ab}$. But the left hand side of \eqref{zeroASDWeyl} is precisely the formula for the linearised ASD Weyl spinor of a metric perturbation $\gamma_{ab}$, see \cite[Eq. (5.7.15)]{PR1}, thus the result follows.
\end{proof}

\subsection{Quaternionic deformations}

Since the metric perturbations \eqref{metricpert} have vanishing linearised Ricci and ASD Weyl, they preserve the conditions \eqref{qK} to linear order. We then say that the perturbations are {\em quaternionic deformations}. This fact allows us to connect the Teukolsky equations and the approach in the current work with the treatments of this kind of deformations in \cite{Hoegner} and \cite{Alexandrov2}. The latter are based on Przanowski's characterisation of any geometry satisfying \eqref{qK}: in \cite{Przanowski}, Przanowski showed that any solution to \eqref{qK} admits local complex coordinates $z^{\alpha}=(z^0,z^1)$ and a real scalar field $\K=\K(z^{\alpha},\bar{z}^{\alpha})$ such that the metric can be written as 
\begin{align}\label{metricprz}
 \d{s}^2 = \frac{6}{\lambda}\K_{\alpha\bar\beta}\d{z}^{\alpha}\d\bar{z}^{\beta} 
 + \frac{12}{\lambda}e^{\K}\d{z}^1\d\bar{z}^1
\end{align}
where $\K_{\alpha\bar\beta}=\partial_{\alpha}\partial_{\bar\beta}\K$. The self-dual Einstein condition \eqref{qK} is satisfied iff $\K$ solves Przanowski's equation ${\rm Prz}(\K)=0$, where
\begin{align}\label{prz}
 {\rm Prz}(\K):=\K_{0\bar{0}}\K_{1\bar{1}} - \K_{0\bar{1}}\K_{1\bar{0}} 
 +(2\K_{0\bar{0}} - \K_{0}\K_{\bar{0}})e^{\K}.
\end{align}

A class of quaternionic deformations is obtained by simply perturbing Przanowski's potential, $\K \to \K+\varepsilon\delta\K$, where $\delta\K$ is required to satisfy ${\rm dPrz}(\delta\K)=0$, with ${\rm dPrz}$ the linearised Przanowski operator:
\begin{equation}\label{lprz}
\begin{aligned}
 {\rm dPrz}(\delta\K) ={}& (\K_{1\bar{1}}+2e^{\K})\delta\K_{0\bar{0}}+\K_{0\bar{0}}\delta\K_{1\bar{1}} - \K_{1\bar{0}}\delta\K_{0\bar{1}} - \K_{0\bar{1}}\delta\K_{1\bar{0}} \\
 & + e^{\K}[(2\K_{0\bar{0}}-\K_{0}\K_{\bar{0}})\delta\K - \K_{0}\delta\K_{\bar{0}}-\K_{\bar{0}}\delta\K_{0}]
\end{aligned}
\end{equation}
where again $\delta\K_{\alpha\bar\beta}=\partial_{\alpha}\partial_{\bar\beta}\delta\K$.
In \cite{Hoegner}, H\"ogner constructed a conformally invariant differential operator on quaternionic-K\"ahler 4-manifolds, and showed that, when acting on scalar fields with conformal weight $w=1$, it is proportional to ${\rm dPrz}$. It turns out that H\"ogner's operator is essentially the quaternionic-K\"ahler specialisation of the Teukolsky operator of section \ref{sec:TeukolskyOp}. Using \eqref{Cscalar} and \eqref{Cvector}, one can indeed check that on scalars with weights $(w,0)$, 
\begin{align*}
 \Box^{\C} = \Box + (w+1)(2f^a\nabla_a+wf^af_a) - w\tfrac{R}{6}.
\end{align*}
In terms of the quantities in \eqref{metricprz}, the Lee form is $f_a\d{x}^a=-(\partial\log\K_{\bar{0}}+\bar\partial\log\K_{0})$. Then on any scalar with weights $w=1$, $p=0$, after a computation (and using $R=4\lambda$) we find
\begin{align}\label{BoxCPrz}
 \Box^{\C} = \frac{6}{\lambda\det g_{\alpha\bar\beta}}{\rm dPrz}
\end{align}
in agreement with the pioneering work \cite{Hoegner}. Furthermore, H\"ogner also gave a contour integral formula for perturbations of Przanowski's potential, see \cite[Eq. (5)]{Hoegner} (cf. also \cite{Alexandrov2}). This coincides with the case ${\bf i}=2$ in formula \eqref{qKd} (which corresponds to $w=1$, $p=0$).

While perturbations of Przanowski's potential give one class of quaternionic deformations, the results of section \ref{Sec:GPqK} show that there are other such deformations described by the tensors \eqref{metricpert} with ${\bf i}\neq2$. In these cases, the scalar equation involved is the Teukolsky equation with $p\neq0$ (instead of \eqref{lprz}), and \eqref{qKd} gives a contour integral formula for its solutions. Note, however, that in Euclidean signature the tensor fields \eqref{metricpert} with ${\bf i}\neq2$ are not real.

\subsection{The case with an isometry}

Finally, we comment very briefly on the case in which $(\M,g_{ab})$ satisfies \eqref{qK} and has also a Killing vector. In this case, Tod showed in \cite{Tod} that the geometry must be (locally) conformally K\"ahler, so the construction of section \ref{Sec:CK} applies. In the notation of that section, and using the expression for the Lee form given before, we deduce $\phi^{-1}=\K_{0}=\K_{\bar{0}}$. In particular, using \eqref{identityCK3} and \eqref{BoxCPrz}, for $w=1$ we recover \cite[Eq. (6.5)]{Alexandrov2}. More generally, we can map back and forth between the conformal wave equation and the Teukolsky equation for any weights $(w,p)$ with $p$ even, as described in section \ref{Sec:CK}.

\section{Final Remarks}

We constructed a correspondence between twistor functions and solutions to Teukolsky-like equations in conformally self-dual spaces, and we gave a contour integral formula for the latter in terms of essentially free holomorphic twistor data. Apart from the conformal self-duality assumption \eqref{SDWeyl}, the geometries we considered are generic (except in section \ref{Sec:QK} where we specialised to the self-dual Einstein case). To obtain more explicit coordinate descriptions of the space-time solutions, one must specialise to a specific geometry with its twistor space and use the corresponding incidence relations that express twistor coordinates in terms of space-time coordinates. For instance, in \cite[Section 5]{Hoegner} one can find examples of twistor coordinates for a number of geometries.

Following the twistor literature \cite{MW, DM}, there are other structures of interest that one can explore using the results in this work, such as the existence of hierarchies of commuting flows, Hamiltonian formulations, hidden symmetry algebras, the construction of special coordinates for twistor space $\PT$, the linear deformations of $\PT$ induced by gravitational perturbations, etc. A point of major interest for us would be to analyse whether the construction in this work can be used to study perturbations of {\em non-self-dual} black holes: in this case the background space also has an integrable complex structure and the connection $\C_a$ and many of its properties can be used (cf. \cite{A19}), but the twistor distribution has trivial kernel so ordinary twistor functions do not exist. We leave these questions for future work.

\paragraph{Acknowledgements.}
The author acknowledges support of the Institut Henri Poincar\'e (UAR 839 CNRS-Sorbonne Universit\'e) and LabEx CARMIN (ANR-10-LABX-59-01) in Paris, during a research stay in the spring 2024 in which this work was started.

\end{document}